\def\downparenfill{$\m@th\braceld\leaders\vrule\hfill\bracerd$}
\def\downparenfill{$\m@th\braceld\leaders\vrule\hfill\bracerd$}
\def\overparen#1{\mskip 2mu\mathop{\vbox{\ialign{##\crcr\crcr \noalign{\kern0.4ex}
\downparenfill\crcr\noalign{\kern0.4ex\nointerlineskip}
$\hfil\displaystyle{#1}\hfil$\crcr}}}\limits\mskip 2mu} 
\definecolor{olivegreen}{rgb}{0,0.5,0}
\newtheorem{thmm}{Theorem}
\newtheorem{lemm}{Lemma}
\newtheorem{exe}{Example}
\newtheorem{corol}{Corollary}
\newtheorem{ass}{Assumption}
\newtheorem{prope}{Property}
\newtheorem{defin}{Definition}
\newtheorem{remm}{Remark}
\newenvironment{lemma}{\begin{lemm}}{\hfill \end{lemm}}
\newenvironment{remark}{\begin{remm} \rm}{\hfill \end{remm}}
\newenvironment{assumption}{\begin{ass}}{\end{ass}}
\newenvironment{theorem}{\begin{thmm}}{\hfill \end{thmm}}
\newenvironment{proof}{{\it Proof. }}{\hfill $\blacksquare$ }
\newtheorem{dwellt}{Condition}
\newtheorem{feedback}{Feedback}
\newtheorem{adapt_alg}{Algorithm}
\newenvironment{adapt}{\begin{adapt_alg}}{\hfill \end{adapt_alg}}
\newif\ifitsdraft
\def\itsdraft{\global\itsdrafttrue}
\begin{document}

\begin{frontmatter}
\title{Adaptive Boundary Control of the Kuramoto-Sivashinsky Equation Under Intermittent Sensing} 

\author{M. C. Belhadjoudja}\ead{mohamed.belhadjoudja@gipsa-lab.fr}, 
\author{M. Maghenem}\ead{mohamed.maghenem@gipsa-lab.fr},
\author{E. Witrant}\ead{emmanuel.witrant@gipsa-lab.fr},
\author{C. Prieur}\ead{christophe.prieur@gipsa-lab.fr} 

\address{The authors are with the Universit\'e Grenoble Alpes, CNRS, Grenoble-INP, GIPSA-lab, F-38000, Grenoble, France}
       
\begin{keyword}    
KS equation, boundary control, intermittent sensing, Lyapunov methods, adaptive design.
\end{keyword}

\begin{abstract}
We study in this paper boundary stabilization, in the $L^2$ sense, of the perturbed Kuramoto-Sivashinsky (KS) equation subject to intermittent sensing. We assume that we measure the state on a given spatial subdomain during certain time intervals, while we measure the state on the remaining spatial subdomain during the remaining time intervals. We assign a feedback law at the boundary of the spatial domain and force to zero the value of the state at the junction of the two subdomains. Throughout the study, the equation's destabilizing coefficient is assumed to be unknown and possibly space dependent but bounded. As a result, adaptive boundary controllers are designed under different assumptions on the perturbation. In particular, we guarantee input-to-state stability (ISS) when an upperbound on the perturbation’s size is known. Otherwise,
only global uniform ultimate boundedness (GUUB) is guaranteed. In contrast, when the state is measured at every spatial
point all the time (full state measurement), convergence to an arbitrarily-small neighborhood of the origin is guaranteed, even
if the perturbation’s maximal size is unknown. Numerical simulations are performed to illustrate our results.
\end{abstract}

\end{frontmatter}

\section{Introduction} \label{Introduction}

In control loops, inputs are often subject to sensor limitations. For systems governed by partial differential equations (PDEs) evolving in both space and time, a major limitation is the spatial range of sensing, as it can be impractical to measure the state simultaneously at every spatial point. 
Another limitation is the energy cost associated with always-on sensors, as it can be unaffordable to run the sensor  all the time. In this work, we propose an intermittent-sensing scenario that takes into account the aforementioned limitations, in which the system’s state is measured at certain spatial  subdomains over specific time intervals, rather than continuously across the entire spatial domain.
This scenario depicts situations of restricted-energy sensors \cite{intermittent_app1,intermittent_app5}, network control systems \cite{intermittent_app4,ew}, and mobile sensors \cite{intermittent_app3}. 

The proposed sensing scenario is considered in the context of the perturbed KS equation given by \cite{kuramoto78,sivashinsky80}
\begin{align}
u_{t}+uu_{x}+ \lambda(x) u_{xx}+u_{xxxx} = f(x,t), \label{KS}
\end{align}
where $x \in (0,1)$ is the 
one-dimensional space variable, $\lambda >0$ is known as the \textit{destabilizing coefficient}, and $f$ is the perturbation. This equation has been used to model various physical phenomena including wildfires \cite{wildfire}, turbulence in reaction-diffusion systems \cite{kuramoto78}, and plasma instabilities \cite{plasmaInstable1}, among others. 

Since \cite{liu2001stability}, several boundary controllers are proposed to stabilize the origin for \eqref{KS} when $f\equiv 0$;  see \cite{kobayashi,coron2015fredholm,HugoKS}. These works, however, either assume a small $\lambda$, in which case boundary measurements are enough to ensure global asymptotic/exponential stability, or assume arbitrary $\lambda$ but small initial conditions while requiring full state measurement. Additionally, $\lambda$ is often assumed constant, though it could be space-dependent according to \cite{vs1,vs2}. To the best of our knowledge, the first work to study \eqref{KS} without constraining the range of the initial condition nor the size of $\lambda$ is \cite{KS1}. The state in the latter reference is assumed to be measured intermittently over the subdomains $(0,Y)$ and $(Y,1)$. Furthermore, feedback controllers are applied at $x=0$ and $x=1$, while enforcing a zero state at $x=Y$, to ensure $L^2$ global exponential stability of the origin. However, the destabilizing coefficient $\lambda$ is assumed constant and known. This requirement is relaxed in \cite{ACC23-KS} by allowing $\lambda$ to be unknown, and proposing an adaptive
design of the control parameters, to guarantee the same stability conclusions. On the other hand, the perturbed KS equation is considered in \cite{caoInh}, where a boundary controller using boundary measurements is shown to guarantee an input-to-state stability  with respect to $f$, under small $\lambda$.

In this work, we study boundary stabilization for \eqref{KS} under intermittent sensing, in the presence of the bounded space- and time-dependent perturbation $f$. More specifically, we extend the adaptive Lyapunov-based approach in \cite{ACC23-KS} to the perturbed case, when an upper bound on the norm of $f$ is either known or unknown. In the former case, we establish $L^2$-input-to-state stability \cite{ISS_PDE} with respect to $f$, while in the latter case, we establish only $L^2$-global uniform ultimate boundedness. On the other hand, under full state measurement, we simplify our design to achieve convergence to arbitrarily-small neighborhoods of the origin, even if we ignore an upper bound on the norm of $f$. Notably, 
the presence of perturbations prevents us from establishing the same properties under full and intermittent sensing, different from the unperturbed case, where $L^2$ global exponential stability of the origin is guaranteed under both scenarios.

The rest of the paper is organized as follows. Problem formulation is in Section \ref{Problem formulation}. The proposed adaptive 
 controller as well as key Lyapunov inequalities are in Section \ref{preliminaries}. The results are in Sections \ref{results} and \ref{GpA}. Finally, numerical simulations are in Section \ref{numerical}. 

\ifitsdraft 

\else 

Due to space limitations, certain proofs and intermediate lemmas have been omitted and can be found in \cite{preprint}.

\fi 

\textbf{Notation.}  
Depending on the context, a.e. means either almost every or almost everywhere. We denote by $L^2(a,b)$, for $b>a$, the space of functions $u:[a,b]\to \mathbb{R}$ such that $\int_{a}^{b}u(x)^2dx<+\infty$. Furthermore, we let $|u|_{\infty} := \esssup_{x\in (a,b)}|u(x)|:=\inf \{M\geq 0\ : \ |u(x)|\leq M \ \ \text{for a.e.} \ x\in (a,b)\}$. For $(x,t) \mapsto u(x,t)$, the partial derivative of $u$ with respect to $t$ is denoted by $u_t$, the first partial derivative with respect to $x$ is denoted by $u_x$, the second partial derivative with respect to $x$ is denoted by $u_{xx}$ (and so on), and we may write $u(x)$ instead of $u(x,t)$. We denote the time derivative of a function $t \mapsto V(t)$ by $\dot{V}$. We also denote the space derivative of a function $x \mapsto \lambda(x)$ of a scalar variable by $\lambda'$. A continuous function $\Phi : \mathbb{R}_{\geq 0}\to \mathbb{R}_{\geq 0}$ is of class $\mathcal{K}$ if it vanishes at zero and is strictly increasing. Finally, for $ x \in \mathbb{R}$,  $\sign (x) = 1$ if $x>0$, $=0$ if $x=0$ and $=-1$ if $x<0$.

\section{Intermittent sensing and control location} \label{Problem formulation}

In this section, we formulate the proposed sensing scenario and the in-domain and boundary conditions.

\subsection{Intermittent sensing} \label{Intermittent sensing}
Consider equation \eqref{KS}, let $Y \in (0,1)$, and consider a sequence 
$\{t_i\}^{\infty}_{i=1} \subset \mathbb{R}_{\geq 0}$, where $t_1 = 0$ and $t_{i+1} > t_i$ for all $i\in \{1,2,...\}$, such that
\begin{enumerate}[label={
S\arabic*)},leftmargin=*]
\item \label{s1} $u(x,t)$ is measured for all $t \in I_{1}:= \bigcup^{\infty}_{k = 1} [t_{2k-1},t_{2k})$ and for a.e. $x\in (0,Y)$.
\item \label{s2} $u(x,t)$ is measured for all $t\in I_{2}:= \bigcup^{\infty}_{k = 1} [t_{2k},t_{2k+1})$ and for a.e. $x\in (Y,1)$.
\end{enumerate}
Associated with the proposed sensing scenario, we consider the following assumption.
\begin{assumption} \label{dwelltime}
There exist four constants $\overline{T}_1$, $\overline{T}_{2}$, $\underline{T}_{1}$, $\underline{T}_{2}>0$ such that, for each $k \in \mathbb{N}^{*}$, we have
\begin{align*}
\underline{T}_{1}  \leq t_{2k} - t_{2k-1} \leq \overline{T}_1, 
\quad 
\underline{T}_{2}  \leq t_{2k+1} - t_{2k} \leq \overline{T}_2.
\end{align*}
\end{assumption}

The proposed sensing scenario could represent situations where two battery-powered sensors are used (one measuring $u$ over $(0,Y)$, and the other one over $(Y,1)$), in  case where their simultaneous activation can be costly; see \cite{intermittent_app1,intermittent_app5}. Furthermore, in the context of network control systems \cite{network}, we can assume that the two sensors share the same channel to transfer measurements to the controller. Hence, each sensor is allowed to use the channel only over certain time intervals. Another motivation emerges when using a single mobile sensor \cite{scanning}. Indeed, when the sensor’s commuting speed (between two subdomains $(0,Y)$ and $(Y,1)$) is fast enough, the proposed sensing scenario can offer a fair approximation.

\subsection{Control locations} \label{Intermittent control}

In the case of intermittent sensing, we propose to control \eqref{KS} at $x=0$ and $x=1$. Also, we set the $u$ and its spatial derivative $u_x$ at $x=Y$ to a null value. Therefore, we assimilate \eqref{KS} to a system of two PDEs interconnected through boundary constraints at $x=Y$. That is, we introduce the system of PDEs
\begin{subequations}
\label{twopdesB}
\begin{equation}
\label{twopdes}
\begin{aligned} 
w_{t}  + w w_{x} + \lambda w_{xx}+w_{xxxx} & = f \qquad x\in (0,Y),
\\
v_{t}  + v v_{x} + \lambda v_{xx}+v_{xxxx} & = f \qquad x\in (Y,1),
\end{aligned}
\end{equation}
\begin{equation} \label{1X1BC}
\begin{aligned}
w(Y)& = w_{x}(Y) = w_{x}(0) = 0, 
\\
v(Y) & = v_{x}(Y) = v_{x}(1) = 0, \\
w(0) & = u_{1}, ~ v(1) = u_{2},
\end{aligned}
\end{equation}
\end{subequations}
where $(u_{1},u_{2})$ are control  inputs.

As a consequence, we define a solution $u : [0,1] \times \mathbb{R}_{\geq 0} \rightarrow \mathbb{R}$ to  \eqref{KS} subject to the boundary conditions
\begin{equation}
\label{eqBKS}
\begin{aligned}
u(Y) & = u_{x}(0) = u_{x}(Y) = u_{x}(1)= 0, \\
u(0) & = u_{1},~ u(1) = u_{2}, 
\end{aligned}
\end{equation}
for some inputs $(u_{1},u_{2})$,  as $u(x,t) := w(x,t)$ for a.e. $(x,t)\in (0,Y) \times \mathbb{R}_{> 0}$ and $u(x,t) := v(x,t)$ for a.e. $(x,t)\in (Y,1) \times \mathbb{R}_{> 0}$, with $(w,v)$ a strong solution to \eqref{twopdesB} subject to the same inputs $(u_1,u_2)$; see \cite[Chapter 9]{strong} for the concept of strong solutions. According to this concept of solutions, we study \eqref{KS} under \eqref{eqBKS} by fully focusing on \eqref{twopdesB}.

\section{General approach}  \label{preliminaries}

We follow in this work a Lyapunov-based approach to design $(u_1, u_2)$ for \eqref{twopdesB}. To do so, we start introducing the Lyapunov function candidates
\begin{align} \label{Vs}
\hspace{-0.4cm} V_{1}(w) := \frac{1}{2}\int_{0}^{Y}w(x)^2dx, ~~ V_{2}(v) := \frac{1}{2}\int_{Y}^{1}v(x)^2dx. 
\end{align}

The following lemma establishes key Lyapunov inequalities along \eqref{twopdesB}, that hold under the following assumption.     
\begin{assumption} \label{asssupf}
The function $\lambda$ is absolutely continuous on $(0,Y)\cup (Y,1)$, and there exist $\bar{f}$, $\bar{\lambda}_l$, $\bar{\lambda}_r$, $\bar{\lambda}'_l$, $\bar{\lambda}'_r > 0$ such that $|f|_\infty \leq \bar{f}$ and
\begin{align*}
 \esssup_{x\in (0,Y)}|\lambda (x)| & \leq \bar{\lambda}_l, \  \ \esssup_{x\in (Y,1)}|\lambda (x)| \leq \bar{\lambda}_r,
\\
\esssup_{x\in (0,Y)} |\lambda'(x)| & \leq \bar{\lambda}'_l, \  \  \esssup_{x\in (Y,1)} |\lambda'(x)| \leq \bar{\lambda}'_r.
\end{align*}
\end{assumption}
 
\begin{lemma}\label{v_space_vary}
Along \eqref{twopdesB}, it holds that 
\begin{align}
&\dot{V}_1 \leq \theta_{1}V_1+C_1\sqrt{V_1}+ \frac{u_1^3}{3}+u_1w_{xxx}(0), \label{v1_new} \\
&\dot{V}_2 \leq \theta_2V_2+C_2\sqrt{V_2}-\frac{u_2^3}{3}-u_2v_{xxx}(1), \label{v2_new}
\end{align}
where $C_{1} := \sqrt{2Y}\bar{f}, ~ C_{2} := \sqrt{2(1-Y)}\bar{f}$, and 
\begin{align*}
\theta_1 & := 
{\bar{\lambda}_{l}}^{'2}
+ 2 \left(\bar{ \lambda}_{l}+\frac{1}{2}\right)\left(\left(
\bar{\lambda}_{l} + 
\frac{1}{2} \right) + 
\frac{12}{Y^2}\right), 
 \\ 
\theta_2 & := 
{\bar{\lambda}_{r}}^{'2} 
+ 2\left({\bar{\lambda}}_{r} 
+
\frac{1}{2}\right)  \left(\left(\bar{\lambda}_{r} +
\frac{1}{2}\right)+\frac{12}{(1-Y)^2}\right).
\end{align*}
\end{lemma}
\begin{proof}
The proof is in \ifitsdraft the Appendix \else \cite{preprint} \fi.
\end{proof}

In view of \eqref{v1_new}-\eqref{v2_new}, the control input $u_1$ affects the (worst-case) behavior of $V_1$ only and since $(V_1,w_{xxx}(0))$ is available on $I_1$ only, we propose to design $u_1$ to guarantee a stable behavior for $V_1$ over $I_1$ and set $u_1 = 0$ otherwise. For the same reasons, we design $u_2$ to guarantee a stable behavior for $V_2$ over $I_2$ and set $u_2 = 0$ otherwise. That is,  we let 
\begin{equation}
\label{eqInputs}
\begin{aligned} 
  (u_{1}, u_{2}) :=
\left\{
\begin{matrix} 
(\kappa (V_{1},w_{xxx}(0),\hat{\theta}_{1}),0) &  \quad \text{on} ~ I_{1}, 
\\
(0,-\kappa (V_{2},v_{xxx}(1),\hat{\theta}_{2})) & \quad \text{on}~  I_{2}, 
\end{matrix} \right.
\end{aligned}
\end{equation}
where $\hat{\theta}_1, \hat{\theta}_2 \geq 0$ are adaptation parameters 
that we design to handle the fact that $\lambda$, $\lambda'$, and $(\overline{T}_{1},\overline{T}_{2},\underline{T}_{1}, \underline{T}_{2})$ 
from Assumption \ref{dwelltime} can be unknown.
Furthermore, we design $\kappa : \mathbb{R}_{\geq 0}\times \mathbb{R}\times \mathbb{R}_{\geq 0}\to \mathbb{R}$ to verify 
\begin{align}
\frac{\kappa(V,\omega,\hat{\theta})^3}{3}+\kappa(V,\omega,\hat{\theta})\omega \leq - \hat{\theta}V. \label{kappa}
\end{align} 
The latter inequality combined to  \eqref{v1_new}-\eqref{v2_new} yields 
\begin{equation} \label{eq.switch}
\begin{aligned}
\left\{
\begin{matrix}
\begin{matrix}
\dot{V}_1 \leq & (\theta_{1}-\hat{\theta}_{1})V_{1} + C_1\sqrt{V_1}
\\
\dot{V}_2 \leq & \theta_{2} V_2 + C_2\sqrt{V_2}
\end{matrix}
\qquad \text{a.e. on}~ I_{1},
\\   ~~ \\ 
\begin{matrix}
\dot{V}_1 \leq & \theta_{1} V_1+C_1\sqrt{V_1}
\\
\dot{V}_2 \leq & (\theta_{2}-\hat{\theta}_{2})V_{2}+C_2\sqrt{V_2}
\end{matrix}
\qquad \text{a.e. on}~ I_{2}.
\end{matrix}
\right.
\end{aligned}
\end{equation}
As a result, the stability analysis of the origin for the system of  PDEs \eqref{twopdesB} is achieved through the stability analysis of the origin for the switched system \eqref{eq.switch}. 

We provide in the next lemma an example of a feedback law $\kappa$ verifying \eqref{kappa}.

\begin{lemma} 
\label{lem_control}
Given $\varepsilon,\delta >0$   such that $\delta \varepsilon \geq 1$, $\varepsilon (\delta \varepsilon) \geq 1$, and $\varepsilon^3\delta^2-(\varepsilon/3)-\varepsilon \delta - 1 \geq 0$, the function $\kappa : \mathbb{R}_{\geq 0} \times \mathbb{R} \times \mathbb{R}_{\geq 0} \rightarrow \mathbb{R}$ given by 
\begin{equation}\label{eq:kappa}
    \kappa(V,\omega,\hat{\theta}) := 
    \left\lbrace 
    \begin{aligned}
 -\sign(\omega)   \sqrt[3]{V} \qquad  & \text{if $|\omega| \geq l(V,\hat{\theta})$},\\
  - \varepsilon (\hat{\theta}  + \delta ) \sqrt[3]{V}  \qquad  &\text{otherwise,}
    \end{aligned}
    \right. 
\end{equation}
where $l(V,\hat{\theta}) := (1/3) (1 + 3 \hat{\theta}) V^{2/3}$,  
verifies \eqref{kappa}.
\end{lemma}
\begin{proof}
The proof is in \ifitsdraft the Appendix \else \cite{preprint} \fi.
\end{proof}

\begin{remark}\label{chattering}
The discontinuity of $\kappa$ is key to verify \eqref{kappa} while guaranteeing boundedness of $\kappa$ independently of how large $\omega$ can be, and as long as $V$ and $\hat{\theta}$ are bounded. This will allow us to conclude that, as long as $(V_1,V_2,\hat{\theta}_1,\hat{\theta}_2)$ are bounded, the control inputs $(u_1,u_2)$ will remain bounded independently of how large $(w_{xxx}(0),v_{xxx}(1))$ may be. Another discontinuity in $(u_1,u_2)$ is due to transitioning from a time interval in $I_{1}$ to a time interval in $I_{2}$, and vice versa; see \eqref{eqInputs}. Discontinuous boundary control of PDEs brings challenges in terms of numerical and practical implementation as well as to ensure well-posedness; see \cite{discontinuous1,discontinuous2}.
\end{remark}

\section{Results under intermittent sensing} \label{results}

In this section, we design the parameters $(\hat{\theta}_1,\hat{\theta}_2)$ in \eqref{eqInputs} when $\bar{f}$ is unknown. Then, to ensure stronger properties, we assume that $\bar{f}$ is known.    

\subsection{Unknown perturbation range} \label{GUUB}

In this case, we update $(\hat{\theta}_1,\hat{\theta}_2)$ according to the following algorithm.
\begin{adapt}\label{adapt_algo_2}
Given $\Delta_{1}, \Delta_{2}$, $\sigma >0$, the coefficients $(\hat{\theta}_{1},\hat{\theta}_{2})$ are dynamically updated as follows.
\begin{enumerate}[label={
R\arabic*)},leftmargin=*]
\item \label{item:T41} 
 On every interval $[t_{2k-1},t_{2k}) \subset I_1$, we set $\dot{\hat{\theta}}_2:=0$. Moreover, if 
 \begin{align*}
\hspace{-0.7cm} & V_{1}(t_{2k-1})  > V_{1}(t_{2k-3})\exp^{-\sigma (t_{2k-1}-t_{2k-3})} \nonumber \\
\hspace{-0.7cm} & +\left(\hat{\theta}_{1}(t_{2k-3})+\frac{\hat{\theta}_{1}(t_{2k-3})^{2}}{4}\right)  \exp^{\left(\hat{\theta}_{1}(t_{2k-3})+1\right) \left( t_{2k-1}-t_{2k-3} \right)}, 
\end{align*}
 we set $\dot{\hat{\theta}}_{1}:= \Delta_{1}$; otherwise, we set $\dot{\hat{\theta}}_{1}:=0$.
\item \label{item:T42}  
On every interval $[t_{2k},t_{2k+1}) \subset I_2$, we set $\dot{\hat{\theta}}_{1}:=0$. Moreover, if 
 \begin{align*}
\hspace{-0.7cm} & V_{2}(t_{2k})  > V_{2}(t_{2k-2})\exp^{-\sigma (t_{2k}-t_{2k-2})} \nonumber \\
 \hspace{-0.7cm} & +\left(\hat{\theta}_{2}(t_{2k-2})+\frac{\hat{\theta}_{2}(t_{2k-2})^{2}}{4}\right)  \exp^{\left( \hat{\theta}_{2}(t_{2k-2})+1 \right) \left(t_{2k}-t_{2k-2}\right)},
 \end{align*}
 we set $\dot{\hat{\theta}}_{2}:= \Delta_{2}$; otherwise, we set $\dot{\hat{\theta}}_{2}:=0$.
\item \label{item:T43}  On $[t_{1},t_{3}]$, $\hat{\theta}_{1}=\hat{\theta}_{1}(0)\geq 0$ and $\hat{\theta}_{2}=\hat{\theta}_{2}(0) \geq 0$. 
\end{enumerate}
\end{adapt}

\begin{remark}\label{rem_theta}
According to Algorithm \ref{adapt_algo_2}, if we assume that $t\mapsto \hat{\theta}_1(t)$ is bounded (which we prove later), being non-decreasing, it will become equal to a constant $\theta_1^*$ after some finite time. Hence, according to \ref{item:T41}, we verify that $k\mapsto V_1(t_{2k-1})$ converges to the ball centered at the origin of radius $\left(\theta_1^*+\frac{\theta_1^{*2}}{4}\right)  \exp^{(\theta_1^*+1) \left( \bar{T}_1+\bar{T}_2 \right)}$. As we will show that $\theta_1^*$ admits an upperbound that is independent of the PDEs initial conditions, we will be able to conclude that $V_1$ converges to a neighborhood of the origin whose size is independent of the PDEs initial condition. The same reasoning applies to $(\hat{\theta}_2,V_2)$.
\end{remark}

\begin{remark}
The tuning parameters $\Delta_1$ and $\Delta_2$ represent the adaptation gains for $\hat{\theta}_1$ and $\hat{\theta}_2$, respectively. Specifically, $\Delta_1$ (respectively, $\Delta_2$) determines the rate at which $t \mapsto \hat{\theta}_1(t)$ (respectively, $t \mapsto \hat{\theta}_2(t)$) increases during intervals in $I_1$ (respectively, $I_2$) until specific decay of $t \mapsto V_1(t)$ (respectively, $t \mapsto V_2(t)$) is detected, which marks the end of the adaptation phase. Hence, larger values of $\Delta_1$ (respectively, $\Delta_2$) will result in $\hat{\theta}_1$ (respectively, $\hat{\theta}_2$) quickly compensating the destabilizing effect of $\theta_1$ (respectively, $\theta_2$). 
It is to be noted that, since the growth rate of $t \mapsto \hat{\theta}_1(t)$ (respectively, $t \mapsto \hat{\theta}_2(t)$) is updated at specific time instants, large values for $\Delta_1$ (respectively, $\Delta_2$) can lead to large values for $\hat{\theta}_1$ (respectively, $\hat{\theta}_2$) and thus large control inputs. On the other hand, the gain $\sigma$ defines the desired decay rate for $t \mapsto (V_1(t) + V_2(t))$, once the adaptation phase is over. That is, larger values for $\sigma$ would result in faster decay rates.
\end{remark}

\begin{theorem}\label{theorem2}
Consider system \eqref{twopdesB} such that Assumption \ref{asssupf} holds. 
Consider the intermittent sensing scenario in \ref{s1}-\ref{s2} in Section \ref{Intermittent sensing} such that Assumption \ref{dwelltime} holds.  Let $(V_{1},V_2)$ be defined in \eqref{Vs}, $\kappa$ be defined in \eqref{eq:kappa}, and $(\hat{\theta}_{1},\hat{\theta}_{2})$ be governed by Algorithm \ref{adapt_algo_2}. Then,  system \eqref{twopdesB} in closed loop with $(u_{1},u_{2})$ as in \eqref{eqInputs} satisfies:
\begin{itemize}
    \item $L^2$-global uniform ultimate boundedness: For any  $(\hat{\theta}_{1}(0),\hat{\theta}_{2}(0))$, there exists a constant $r>0$ such that for any $R>0$, there exists a finite time $T(R)\geq 0$ such that, along the closed-loop solutions,  we have 
\begin{center}
$
\hspace{-0.3cm} V_{1}(0)+V_{2}(0)\leq R \Rightarrow V_{1}(t)+V_{2}(t)\leq r, ~ \forall t\geq T(R).
$ \end{center}
    \item For each $(\hat{\theta}_{1}(0),\hat{\theta}_{2}(0))$, there exists $M>0$ such that $\max 
    \{|\hat{\theta}_{1}|_{\infty}, |\hat{\theta}_{2}|_{\infty} \} \leq M$ for all $(V_{1}(0),V_{2}(0))$.
    \item The inputs $(u_1,u_2)$ remain bounded.
\end{itemize}
\end{theorem} 

\begin{proof}
The proof relies on the stability analysis of the origin $\{(V_1,V_2)=0\}$ for the switched system \eqref{eq.switch} and is divided into three steps. In Step $1$, we prove that $(\hat{\theta}_{1},\hat{\theta}_{2})$ become constant after some finite time $T\geq 0$. Furthermore, $\hat{\theta}_{1}$ and $\hat{\theta}_{2})$ never exceed $\max 
\{\hat{\theta}_1(0), M_1\}$ and $\max 
\{\hat{\theta}_2(0),M_2\}$, respectively, where $M_1, M_2>0$ depend only on $(\sigma, \theta_{1},\theta_{2}, C_1, C_2, \overline{T}_{1},\overline{T}_{2},\underline{T}_{1},\underline{T}_{2},\Delta_{1}, \Delta_{2})$. 
In Step $2$, we analyze the function $V_1+V_2$ and show that the $L^2$-global uniform ultimate boundedness property holds. Finally, in Step $3$, we use the structure of $\kappa$ and the fact that $(\hat{\theta}_1,\hat{\theta}_2, V_1,V_2)$ are bounded to conclude that $(u_1,u_2)$ remains bounded. 

\textit{Step 1:} We first show that $\hat{\theta}_{1}$ and $\hat{\theta}_{2}$ become constant after some $T \geq 0$ using contradiction. Namely, we assume that there is not such a $T \geq 0$ such that, for all $t\geq T$,  $\dot{\hat{\theta}}_{1}(t)=\dot{\hat{\theta}}_{2}(t)=0$.  As a result, according to \ref{item:T41}-\ref{item:T43} in Algorithm \ref{adapt_algo_2}, we conclude that 
$\lim_{t \to \infty}\hat{\theta}_{1}(t)=\lim_{t\to \infty}\hat{\theta}_{2}(t) = \infty$. Therefore, there exists $T \geq 0$ such that, for all $t\geq T$, we have
\begin{align}
&\hat{\theta}_1(t) \geq \theta_1+C_1+\frac{(\theta_1+1)\bar{T}_2+\sigma (\bar{T}_1+\bar{T}_2)}{\underline{T}_1}+1, \label{15} \\
&\hat{\theta}_2(t) \geq \theta_2+C_2+\frac{(\theta_2+1)\bar{T}_1+\sigma (\bar{T}_1+\bar{T}_2)}{\underline{T}_2}+1. \label{152}  
\end{align}
Let $k>1$ be such that $t_{2k-3}\geq T$. Using \ifitsdraft Lemma \ref{to_use_proofs1} in the Appendix, \else Lemma 4 in \cite{preprint}, \fi while replacing therein $(V,\hat{\theta},\theta,C)$ by $(V_1,\hat{\theta}_1,\theta_1,C_1)$, we find
\begin{equation}
\label{17proof}
\begin{aligned}
 V_{1}(t_{2k-2}) & \leq V_{1}(t_{2k-3})\exp^{-\sigma (t_{2k-2}-t_{2k-3})} 
\\ & + \hat{\theta}_{1}(t_{2k-3}). 
\end{aligned}
\end{equation}
Next, we shall prove the following inequality
\begin{equation}
\label{22_to_prove} 
\begin{aligned}
V_{1}&(t_{2k-1})  \leq V_{1}(t_{2k-2})\exp^{(\theta_{1}+1)(t_{2k-1}-t_{2k-2})}  \\
&~+\frac{\hat{\theta}_{1}(t_{2k-3})^2}{4}\exp^{(\hat{\theta}_{1}(t_{2k-3})+1)(t_{2k-1}-t_{2k-2})}. 
\end{aligned}
\end{equation}
To obtain the latter inequality, we start using
$$ \dot{V}_1 \leq \theta_1 V_1 + C_1 \sqrt{V_1} \quad \text{a.e. on } [t_{2k-2},t_{2k-1}) \subset I_1. $$
Then, applying \ifitsdraft Lemma \ref{ineqsquare} in the Appendix \else Lemma 5 in \cite{preprint} \fi while replacing $(\theta,C,[0,T],\delta)$ therein by $(\theta_1,C_1,[t_{2k-2},t_{2k-1}], 1)$, we obtain 
\begin{align}
 V_1(t_{2k-1}) & \leq V_1(t_{2k-2})\exp^{(\theta_1+1)(t_{2k-1}-t_{2k-2})} \nonumber\\
&+\frac{C_1^2}{4(\theta_1+1)}\exp^{(\theta_1+1)(t_{2k-1}-t_{2k-2})} \nonumber \\
&\leq \left( V_1(t_{2k-2}) +\frac{C_1^2}{4} \right) \exp^{(\theta_1+1)(t_{2k-1}-t_{2k-2})}. \nonumber 
\end{align}
Using the fact that $\hat{\theta}_1(t_{2k-3}) \geq \theta_1$, we obtain
\begin{equation}
\label{to_use_later2}
\begin{aligned}
V_1(t_{2k-1}) &\leq V_1(t_{2k-2})\exp^{(\theta_1+1)(t_{2k-1}-t_{2k-2})} \\
&+\frac{C_1^2}{4}\exp^{(\hat{\theta}_1(t_{2k-3})+1)(t_{2k-1}-t_{2k-2})}. 
\end{aligned}
\end{equation}
Finally, using \eqref{to_use_later2} and the fact that $\hat{\theta}_1(t_{2k-3})\geq C_1$, we obtain \eqref{22_to_prove}.

By combining \eqref{17proof} and \eqref{22_to_prove}, we obtain
\begin{equation}
\label{later}
\begin{aligned}
&V_{1}(t_{2k-1})\leq V_{1}(t_{2k-3})\exp^{-\sigma (t_{2k-1}-t_{2k-3})} + \\
&\left(\hat{\theta}_{1}(t_{2k-3})+\frac{\hat{\theta}_{1}(t_{2k-3})^{2}}{4}\right) \exp^{(\hat{\theta}_{1}(t_{2k-3})+1)\left(t_{2k-1}-t_{2k-3}\right)}, 
\end{aligned}
\end{equation}
which implies, according to \ref{item:T41} in Algorithm \ref{adapt_algo_2}, that $\dot{\hat{\theta}}_{1}(t)=0$ for all $t\geq t_{2k-1}$. We show in a similar way that $\dot{\hat{\theta}}_{2}(t)=0$ for all $t\geq t_{2k}$, which leads to a contradiction. 

To show the uniform boundedness of $\hat{\theta}_1$ with respect to the closed-loop trajectories, we first suppose the existence of $T\geq 0$ such that 
\begin{equation}\label{theta1_eq}
    \hat{\theta}_{1}(T) = \theta_{1}+C_{1}+\frac{(\theta_{1}+1)\overline{T}_{2}+\sigma (\overline{T}_{1}+\overline{T}_{2})}{\underline{T}_{1}} + 1.
\end{equation}
Since $\hat{\theta}_1$ is non-decreasing, then either $\hat{\theta}_1$ is smaller than the right-hand side of \eqref{theta1_eq}, or $\hat{\theta}_1(0)$ is greater than the right-hand side of \eqref{theta1_eq}, or there exists $T$ such that \eqref{theta1_eq} holds. In the case where such a $T$ exists, we can always pick it to be in $I_{1}$, as $\hat{\theta}_1$ is constant over each interval in $I_2$. Thus, we let $T \in [t_{2k'-3},t_{2k'-2}) \subset I_{1}$ for some $k'>1$. Note that inequality \eqref{theta1_eq} implies \eqref{15}, with $t\geq T$. We have already shown that if \eqref{15} holds, then \eqref{later} also holds for all $k>1$ such that $t_{2k-3}\geq T$. Thus, we can write that
\begin{align*}
&V_{1}(t_{2k+1})\leq V_{1}(t_{2k-1})\exp^{-\sigma (t_{2k+1}-t_{2k-1})} + \nonumber \\
&\left(\hat{\theta}_{1}(t_{2k-1})+\frac{\hat{\theta}_{1}(t_{2k-1})^{2}}{4}\right) \exp^{(\hat{\theta}_{1}(t_{2k-1})+1)\left(t_{2k+1}-t_{2k-1}\right)},
\end{align*}
for all $k>1$ such that $t_{2k-1}\geq T$. Furthermore, using \ref{item:T31} in Algorithm \ref{adapt_algo_2}, we conclude that 
$ 0 \leq \dot{\hat{\theta}}(t)\leq \Delta_1$ for all $t\geq 0$.   
As a result, 
$$\hat{\theta}_1(T) \leq \hat{\theta}_1(t_{2k'+1})\leq \hat{\theta}_1(T) + 2\Delta_1 \overline{T}_1. $$
On the other hand, note that, using \ref{item:T31}, we have $\dot{\hat{\theta}}_1(t)=0$ for all $t \geq t_{2k'+1}$. As a result, 
\begin{align*}
\hat{\theta}_1(t) \leq \hat{\theta}_1(T)+2\Delta_1 \overline{T}_1, \quad \forall t\geq t_{2k'+1}. 
\end{align*}
We thus conclude that $|\hat{\theta}_{1}|\leq \max \{\hat{\theta}_1(0),M_1\}$, with 
\begin{align}
M_1 := \theta_{1}+C_{1}+\frac{(\theta_{1}+1)\overline{T}_{2}+\sigma (\overline{T}_{1}+\overline{T}_{2})}{\underline{T}_{1}}  +1+2\Delta_{1}\overline{T}_{1}. \label{M1}
\end{align}
Similarly, we can show that $|\hat{\theta}_{1}|\leq \max \{\hat{\theta}_1(0),M_1\}$, with 
\begin{align}
M_{2} := \theta_{2}+C_{2} + \frac{(\theta_{2}+1)\overline{T}_{1} + \sigma (\overline{T}_{1}+\overline{T}_{2})}{\underline{T}_{2}} + 1 + 2\Delta_{2}\overline{T}_{2}. \label{M2}
\end{align}

\textit{Step 2:} To study the function $V_{1}+V_{2}$, we first define the sequences  $\{T_{i}\}_{i=0}^{\infty}$ and $\{T_{i}'\}_{i=1}^{\infty}$, such that $T_{i}:=t_{2i+1}$ and $T_{i}':=t_{2i}$. Since $\hat{\theta}_1$ and $\hat{\theta}_2$ are nondecreasing and become constant after some finite time, we conclude the existence of at most a finite number of intervals $[t_{2k-1},t_{2k}) \subset I_1$, on which, $\hat{\theta}_1$ may increase. On the latter intervals, we know that $V_1$ is governed by the inequality $\dot{V}_1 \leq \theta_1V_1+C_1\sqrt{V_1}$, while on the remaining intervals, $V_1$ does not verify the inequality in \ref{item:T31}. The same reasoning applies to $\hat{\theta}_2$ and $V_2$. More precisely, for each initial condition $(\hat{\theta}_{1}(0),\hat{\theta}_{2}(0))$, there exist two integers $N^{*}_{1}, N_{2}^{*} \in \mathbb{N}$ such that, for each locally absolutely continuous solution $(V_{1},V_{2})$ to \eqref{eq.switch}, there exist two finite increasing subsequences
$\{i_1,i_2,...,i_{N_1^*}\}
\subset \mathbb{N}$ and 
$
\{j_1,j_2,...,j_{N_2^*}\} \subset \mathbb{N}^*$ such that
\begin{itemize}
    \item For each $i\in \{i_1,i_2,...,i_{N_{1}^{*}}\}$, we have
    \\
$
   V_{1}(T_{i+1})  \leq \left(V_{1}(T_{i})+\frac{M_{1}^2}{4}\right)\exp^{(M_1 + 1)(T_{i+1}-T_{i})}. 
    $
    \item For each $j\in \{j_1,j_2,...,j_{N_{2}^{*}}\}$, we have 
   \\
  $ V_{2}(T_{j+1}')  \leq \left(V_{2}(T_{j}')+\frac{M_{2}^2}{4}\right)\exp^{(M_2 + 1)(T_{j+1}'-T_{j}')}$. 
    \item For each $i\in \mathbb{N}/\{i_1,i_2,...,i_{N_{1}^{*}}\}$, we have
    \\
    $V_{1}(T_{i+1}) \leq V_{1}(T_{i})\exp^{-\sigma (T_{i+1}-T_{i})} 
    \\
    +\left(M_1+\frac{M_{1}^2}{4}\right) \exp^{(M_1+1)(T_{i+1}-T_{i})}.$
    \item For each $j\in \mathbb{N}^{*}/\{j_1,j_2,...,j_{N_{2}^{*}}\}$, we have
    \\
    $V_{2}(T_{j+1}') \leq V_{2}(T_{j}')\exp^{-\sigma (T_{j+1}'-T_{j}')}\\
    +\left(M_2+\frac{M_{2}^2}{4}\right)\exp^{(M_2+1)(T_{j+1}'-T_{j}')}.$
\end{itemize}
  Using Lemma \ifitsdraft \ref{last_lem} in the Appendix\else 6 in \cite{preprint}\fi, while replacing  $(V,M,\psi,N^*)$ therein by $(V_1,M_1,M_1+1,N^*_1)$, we obtain 
\begin{align*}
V_{1}(T_{i}) & \leq \gamma_{1} V_{1}(0)\exp^{-\sigma T_i} + \Phi_{1}(M_1) \quad \forall i\in \mathbb{N},
\end{align*}
for some $\gamma_1 > 0$ and $\Phi_1 \in \mathcal{K}$. 
Similarly, using Lemma \ifitsdraft \ref{last_lem} in the Appendix\else 6 in \cite{preprint}\fi, while replacing  $(V,M,\psi,N^*, \{T_i\}^\infty_{i=1})$ therein by $(V_2,M_2,M_2+1,N^*_2, 
\{T'_i\}^\infty_{i=1})$, we obtain
\begin{align*}
V_{2}(T_{i}') & \leq \gamma_{2} V_{2}(0) \exp^{-\sigma T_{i}'} + \Phi_{2}(M_2) \quad \forall i\in \mathbb{N}^*,
\end{align*}
for some $\gamma_2 > 0$ and  $\Phi_2 \in \mathcal{K}$.   As a consequence, for each $t\in [T_i,T_{i+1}]$, 
\begin{align}
V_{1}(t) & \leq \gamma_1 \exp^{(\theta_1+1+\sigma)(\overline{T}_1+\overline{T}_2)}V_1(0)\exp^{-\sigma t} \nonumber \\
& + \left[\Phi_{1}(M_1) + \frac{M_{1}^2}{4(\theta_1+1)}\right]\exp^{(\theta_1+1)(\overline{T}_{1}+\overline{T}_{2})}. \label{final_ineq1_new}
\end{align}
Moreover, for each $t\in [T_{i}',T_{i+1}']$,  
\begin{align}
V_{2}(t) & \leq \gamma_2 \exp^{(\theta_2+2+\sigma)(\overline{T}_1+\overline{T}_2)}V_2(0)\exp^{-\sigma t} \nonumber \\
& + \left[\Phi_{2}(M_2) + \frac{M_{2}^2}{4(\theta_2+1)}\right]\exp^{(\theta_2+1)(\overline{T}_{1}+\overline{T}_{2})}. \label{final_ineq2_new}
\end{align}
Defining
\begin{align*}
\gamma & := \max \left\{\gamma_1 \exp^{(\theta_1+1+\sigma)(\overline{T}_1+\overline{T}_2)}, \gamma_2 \exp^{(\theta_2+2+\sigma)(\overline{T}_1+\overline{T}_2)}
\right\}, 
\\
\Phi & := \left[\Phi_{1}(M_1) + \frac{M_{1}^2}{4(\theta_1+1)}\right]\exp^{(\theta_1+1)(\overline{T}_{1}+\overline{T}_{2})} \nonumber \\
& + \left[\Phi_{2}(M_2) + \frac{M_{2}^2}{4(\theta_2+1)}\right]\exp^{(\theta_2+1)(\overline{T}_{1}+\overline{T}_{2})},
\end{align*}
and summing \eqref{final_ineq1_new} and \eqref{final_ineq2_new}, which are valid for all $t\geq 0$, we obtain 
\begin{align*}
V_{1}(t) + V_2(t) \leq \gamma (V_{1}(0)+V_2(0))\exp^{-\sigma t} + \Phi.  
\end{align*}
Let $r := \Phi + \epsilon$, where $\epsilon$ is any positive constant, and suppose that $V_{1}(0)+V_2(0)\leq R$. We conclude that $V_{1}(t) + V_2(t) \leq \gamma R\exp^{-\sigma t} + \Phi$ for all $t \geq 0$. Hence, to guarantee that $\gamma R\exp^{-\sigma t} + \Phi \leq r$,  it is sufficient to have $t \geq T(R) := \frac{1}{\sigma} \log\left(\frac{\gamma R}{\epsilon}\right)$. 

\textit{Step $3$:} The boundedness of $(u_1,u_2)$ follows from the boundedness of $(\hat{\theta}_1,\hat{\theta}_2)$, the $L^2$-global uniform ultimate boundedness property of the closed-loop solutions, and the structure of $\kappa$ in \eqref{eq:kappa} which implies that $|u_i|\leq \max\{\varepsilon (\hat{\theta}_i+\delta),1\}\sqrt[3]{V_i}$ for $i\in \{1,2\}$.
\end{proof}

\subsection{Known perturbation range} \label{ISS}

In this section, we use the knowledge of $(C_1,C_2)$ in \eqref{eq.switch} to design $(\hat{\theta}_1,\hat{\theta}_2)$ according to the following algorithm. 
\begin{adapt}\label{adapt_algo_1}
Given $\Delta_{1}$, $\Delta_{2}$, $\sigma >0$, we update $(\hat{\theta}_{1},\hat{\theta}_{2})$ according to the following rules.
\begin{enumerate}[label={
R\arabic*)},leftmargin=*]
\item \label{item:T31} 
 On each interval $[t_{2k-1},t_{2k}) \subset I_1$, we set $\dot{\hat{\theta}}_2:=0$. Moreover, if 
 \begin{align*}
 V_{1}(&t_{2k-1})  > V_{1}(t_{2k-3})\exp^{-\sigma (t_{2k-1}-t_{2k-3})} \nonumber \\
 & +\left(C_{1}+\frac{C_{1}^{2}}{4}\right) \exp^{\left(\hat{\theta}_{1}(t_{2k-3}) +1\right) \left(t_{2k-1}-t_{2k-3} \right)}, 
 \end{align*}
 we set $\dot{\hat{\theta}}_{1}:= \Delta_{1}$; otherwise, we set $\dot{\hat{\theta}}_{1}:=0$.
\item \label{item:T32}  
On each interval $[t_{2k},t_{2k+1}) \subset I_2$, we set $\dot{\hat{\theta}}_{1}:=0$. Moreover, if 
 \begin{align*}
 V_{2}(&t_{2k}) > V_{2}(t_{2k-2})\exp^{-\sigma (t_{2k}-t_{2k-2})} \nonumber \\
 &~+\left(C_2+\frac{C_{2}^{2}}{4}\right)\exp^{\left( \hat{\theta}_{2}(t_{2k-2})+1 \right) \left(t_{2k}-t_{2k-2}\right)},
 \end{align*}
 we set $\dot{\hat{\theta}}_{2}:= \Delta_{2}$; otherwise, we set $\dot{\hat{\theta}}_{2}:=0$.
\item \label{item:T33}  On $[t_{1},t_{3}]$, $\hat{\theta}_{1}=\hat{\theta}_{1}(0)\geq 0$ and $\hat{\theta}_{2}=\hat{\theta}_{2}(0)\geq 0$. 
\end{enumerate}
\end{adapt}

The main difference between Algorithms \ref{adapt_algo_2} and \ref{adapt_algo_1}, is that in the latter case, the inequality in \ref{item:T31} uses the constant $C_1$. As a result, along the lines of Remark \ref{rem_theta}, we show that the map $k\mapsto V_1(t_{2k-1})$ converges to the ball centered at the origin of radius $\left(C_1+\frac{C_1^2}{4}\right)  \exp^{(\theta_1^*+1) \left( \bar{T}_1+\bar{T}_2 \right)}$. Hence, since $C_1 := \sqrt{2Y}\bar{f}$, we conclude that the latter radius is a class $\mathcal{K}$ function of $\bar{f}$. The same reasoning applies for the behavior of $V_2$. 

\begin{theorem}\label{ISSthm}
Consider system \eqref{twopdesB} such that Assumption \ref{asssupf} holds. Consider the intermittent sensing scenario in \ref{s1}-\ref{s2} in Section \ref{Intermittent sensing} such that Assumption \ref{dwelltime} holds. 
Let $(V_{1},V_2)$ be defined in \eqref{Vs}, $\kappa$ be defined in \eqref{eq:kappa}, and the parameters $(\hat{\theta}_{1},\hat{\theta}_{2})$ governed by Algorithm \ref{adapt_algo_1}. Then, system \eqref{twopdesB} in closed loop with $(u_{1},u_{2})$ as in \eqref{eqInputs} satisfies:
\begin{itemize}
\item $L^2$-input-to-state stability with respect to $\bar{f}$; namely,  for each $(\hat{\theta}_{1}(0),\hat{\theta}_{2}(0))$, there exists $\gamma\geq 1$ and a class $\mathcal{K}$ function $\Phi$ such that, for all $t\geq 0$, we have 
\begin{align*}
V_{1}(t) + V_{2}(t) \leq \gamma (V_{1}(0)+V_{2}(0))e^{-\sigma t} + \Phi (\bar{f}),
\end{align*}    
where $\sigma>0$ comes from Algorithm \ref{adapt_algo_1}.
\item For each $(\hat{\theta}_{1}(0),\hat{\theta}_{2}(0))$, there exists $M>0$ such that $|\hat{\theta}_{1}|_{\infty}, |\hat{\theta}_{2}|_{\infty}
\leq M$  for all $(V_{1}(0),V_{2}(0))$.
\item The inputs $(u_{1},u_{2})$ remain bounded. Additionally, if $\bar{f}=0$, then $(u_1,u_2)$ converge to zero. 
\end{itemize}
\end{theorem}
\begin{proof}
The proof is divided into three steps. In Step 1, we prove that $\hat{\theta}_{1}$ and $\hat{\theta}_{2}$ become constant after some $T>0$ and are bounded uniformly with respect to $(V_1,V_2)$. In Step 2, we analyze the function $V_1+V_2$ and show that the $L^2$-input-to-state stability property with respect to $f$ is verified. Finally, the boundedness of $(u_1,u_2)$ follows, as in Step $3$ in the proof of Theorem \ref{thm_p}, from the structure of $\kappa$ and the boundedness of $(V_1,V_2,\hat{\theta}_1,\hat{\theta}_2)$.

\textit{Step 1:} We show that $\hat{\theta}_{1}$ and $\hat{\theta}_{2}$ are constant after some $T \geq 0$ using contradiction. Assume that there is not such a finite time $T \geq 0$ such that $ \dot{\hat{\theta}}_{1}(t)=\dot{\hat{\theta}}_{2}(t)=0$ for all $t\geq T$. As a result, according to \ref{item:T41}-\ref{item:T43} in Algorithm \ref{adapt_algo_1}, $\lim_{t\to \infty}\hat{\theta}_{1}(t)=\lim_{t\to \infty}\hat{\theta}_{2}(t) = \infty$. 
Therefore, there exists $T \geq 0$ such that, for all $t\geq T$, inequalities \eqref{15} and \eqref{152} hold. Let $k>1$ be such that $t_{2k-3}\geq T$. Inequality \eqref{to_use_later2} hold. Furthermore, using Lemma \ifitsdraft \ref{to_use_proofs1} in the Appendix \else 4 in \cite{preprint} \fi while replacing therein $(V,\hat{\theta},\theta,C)$ by $(V_1,\hat{\theta}_1,\theta_1,C_1)$, we obtain,
\begin{align}
V_1(t_{2k-2}) \leq V_1(t_{2k-3})\exp^{-\sigma (t_{2k-2}-t_{2k-3})} + C_1. \label{to_use_now} 
\end{align}
By combining \eqref{to_use_later2} and \eqref{to_use_now}, we obtain 
\begin{align*}
V_{1}(t_{2k-1}) & \leq V_{1}(t_{2k-3})\exp^{-\sigma (t_{2k-1}-t_{2k-3})} + \nonumber \\
&\left(C_1+ C_1^{2}/4 \right) \exp^{(\hat{\theta}_{1}(t_{2k-3})+1)\left(t_{2k-1}-t_{2k-3}\right)},
\end{align*}
which implies, according to \ref{item:T41} in Algorithm \ref{adapt_algo_1}, that $\dot{\hat{\theta}}_{1}(t)=0$ for all $t\geq t_{2k-1}$. We show in a similar way that $\dot{\hat{\theta}}_{2}(t)=0$ for all $t\geq t_{2k}$, which leads to a contradiction. We follow the exact same steps as in the proof of Theorem \ref{theorem2} to conclude that $|\hat{\theta}_1|\leq \max 
\{\hat{\theta}_1(0), M_1\}$ and $|\hat{\theta}_2|\leq \max 
\{\hat{\theta}_2(0),M_2\}$, where $M_1$ and $M_2$ are defined in \eqref{M1} and \eqref{M2} respectively.

\textit{Step 2:} To study the function $V_{1}+V_{2}$, we introduce the sequences  $\{T_{i}\}_{i=0}^{\infty}$ and $\{T_{i}'\}_{i=1}^{\infty}$, such that $T_{i}:=t_{2i+1}$ and $T_{i}':=t_{2i}$. As in the proof of Theorem \ref{theorem2}, for each initial conditions $(\hat{\theta}_{1}(0),\hat{\theta}_{2}(0))$, there exist two integers $N^{*}_{1}, N_{2}^{*}\in \mathbb{N}$ such that, for each locally absolutely continuous solution $(V_{1},V_{2})$ to \eqref{eq.switch}, there exist two finite increasing subsequences
$\{i_1,i_2,...,i_{N_1^*}\}\subset \mathbb{N}$ and $\{j_1,j_2,...,j_{N_2^*}\}\subset \mathbb{N}^*$ such that
\begin{itemize}
    \item For each $i\in \{i_1,i_2,...,i_{N_1^*}\}$, we have \\
    $V_{1}(T_{i+1}) \leq \left(V_{1}(T_{i})+\frac{C_{1}^2}{4}\right)\exp^{(M_1 + 1)(T_{i+1}-T_{i})}$.
    \item For each $j\in \{j_1,j_2,...,j_{N_2^*}\}$, we have \\ 
 $ V_{2}(T_{j+1}') \leq \left(V_{2}(T_{j}')+\frac{C_{2}^2}{4}\right)\exp^{(M_2 + 1)(T_{j+1}'-T_{j}')}$.
    \item For each $i\in \mathbb{N}/\{i_1,i_2,...,i_{N_1^*}\}$, we have \\
    $V_{1}(T_{i+1}) \leq V_{1}(T_{i})\exp^{-\sigma (T_{i+1}-T_{i})}\\
    +\left(C_1+\frac{C_{1}^2}{4}\right) \exp^{(M_1+1)(T_{i+1}-T_{i})}$.
    \item For each $j\in \mathbb{N}^{*}/\{j_1,j_2,...,j_{N_2^*}\}$, we have \\
    $V_{2}(T_{j+1}') \leq V_{2}(T_{j}')\exp^{-\sigma (T_{j+1}'-T_{i}')} \\
    +\left(C_2+\frac{C_{2}^2}{4}\right) \exp^{(M_2+1)(T_{j+1}'-T_{j}')}$.
\end{itemize}
 Using Lemma \ifitsdraft \ref{last_lem} in the Appendix\else 6 in \cite{preprint}\fi, while replacing  $(V,M,\psi,N^*)$ therein by $(V_1,C_1,M_1+1,N^*_1)$, we obtain the inequality
\begin{align}
V_{1}(T_{i}) & \leq \gamma_{1} V_{1}(0)\exp^{-\sigma T_i} + \Phi_{1}(C_1) \quad \forall i\in \mathbb{N}, \label{to_p1}
\end{align}
for some $\gamma_1 > 0$ and $\Phi_1 \in \mathcal{K}$. 
Similarly, using Lemma \ifitsdraft \ref{last_lem} in the Appendix\else 6 in \cite{preprint}\fi, while replacing  $(V,M,\psi,N^*, \{T_i\}^\infty_{i=1})$ therein by $(V_2,C_2,M_2+1,N^*_2, 
\{T'_i\}^\infty_{i=1})$, we obtain
\begin{align}
V_{2}(T_{i}') & \leq \gamma_{2} V_{2}(0) \exp^{-\sigma T_{i}'} + \Phi_{2}(C_2) \quad \forall i\in \mathbb{N}^*, \label{to_p2}
\end{align}
for some $\gamma_2 > 0$ and $\Phi_2 \in \mathcal{K}$.
As a consequence, for each $t\in [T_i,T_{i+1}]$, we have
\begin{align}
V_{1}(t) & \leq \gamma_1 \exp^{(\theta_1+1+\sigma)(\overline{T}_1+\overline{T}_2)}V_1(0)\exp^{-\sigma t} \nonumber \\
& + \left[\Phi_{1}(C_1) + \frac{C_{1}^2}{4(\theta_1+1)}\right]\exp^{(\theta_1+1)(\overline{T}_{1}+\overline{T}_{2})}. \label{final_ineq1}
\end{align}
Similarly, for each $t\in [T_{i}',T_{i+1}']$, we have 
\begin{align}
V_{2}(t) & \leq \gamma_2 \exp^{(\theta_2+2+\sigma)(\overline{T}_1+\overline{T}_2)}V_2(0)\exp^{-\sigma t} \nonumber \\
& + \left[\Phi_{2}(C_2) + \frac{C_{2}^2}{4(\theta_2+1)}\right]\exp^{(\theta_2+1)(\overline{T}_{1}+\overline{T}_{2})}. \label{final_ineq2}
\end{align}
Defining 
\begin{align*}
& \gamma  := 
\\ & 
\max \left\{\gamma_1 \exp^{(\theta_1+1+\sigma)(\overline{T}_1+\overline{T}_2)}, \gamma_2 \exp^{(\theta_2+2+\sigma)(\overline{T}_1+\overline{T}_2)}
 \right\},
\\
& \Phi(\bar{f}) :=\left[\Phi_{1}(C_1) + \frac{C_{1}^2}{4(\theta_1+1)}\right]\exp^{(\theta_1+1)(\overline{T}_{1}+\overline{T}_{2})} \nonumber \\
&~+ \left[\Phi_{2}(C_2) + \frac{C_{2}^2}{4(\theta_2+1)}\right]\exp^{(\theta_2+1)(\overline{T}_{1}+\overline{T}_{2})}, 
\end{align*}
and combining \eqref{final_ineq1} and \eqref{final_ineq2}, we obtain, for all $t\geq 0$, 
\begin{align}
\hspace{-0.3cm} V_{1}(t) + V_2(t) \leq \gamma (V_{1}(0)+V_2(0))\exp^{-\sigma t} + \Phi (\bar{f}). 
\end{align}
\end{proof}

\begin{remark}
The main result in \cite{ACC23-KS} follows now as a direct corollary of Theorem \ref{ISSthm}. Indeed, using the latter theorem with $f\equiv 0$, we recover the $L^2$ global exponential stability of the origin.
\end{remark}

\section{Result under full sensing} \label{GpA}

In this section,
we consider \eqref{KS} subject to  
\begin{align}
u(0) = u_1, \ \ u_x(0)=u(1)=u_x(1)=0,\label{bc}
\end{align}
where $u_1$ is a control input to be designed. We assume that $\lambda$ is absolutely continuous on $(0,1)$ and that $u$ is measured a.e. on $(0,1)$ and for all time. As a result, defining the Lyapunov function candidate 
\begin{align}
V(u) := \frac{1}{2}\int_{0}^{1}u(x)^2dx, \label{V_new}
\end{align}
and using Lemma \ref{v_space_vary}, we conclude that, along \eqref{KS} and \eqref{bc},   
\begin{align}
\dot{V} \leq \theta V + C \sqrt{V} + \frac{u_1^3}{3}+u_1u_{xxx}(0), ~ C := \sqrt{2}\bar{f} \label{36}
\end{align}
where 
$ \theta := |\lambda'|_{\infty}^{2}+2\left(|\lambda|_{\infty}+\frac{1}{2}\right)\left(\left(|\lambda|_{\infty}+\frac{1}{2}\right)+12\right).$

Now, by letting $u_1 := \kappa (V,u_{xxx}(0),\hat{\theta})$, where $\kappa$ is defined in \eqref{eq:kappa} and $\hat{\theta} > 0$ to be designed, we conclude using \eqref{36} and Lemma \ref{lem_control} that
\begin{align*}
\dot{V} \leq (\theta - \hat{\theta})V + C \sqrt{V}. 
\end{align*}
Finally, using Young inequality, we obtain
\begin{align} \label{eqYoungafter}
\dot{V} \leq \left(\theta + \frac{C^2}{\epsilon}-\hat{\theta}\right)V+\epsilon \quad \forall \epsilon>0.   
\end{align}
At this point, we propose to update the parameter $\hat{\theta}$ according to the following algorithm.

\begin{adapt}\label{adapt_0}
Given $\Delta$, $\tau$, $\epsilon$, $\sigma >0$,  the coefficient $\hat{\theta}$ is dynamically updated, on each interval $[k\tau,(k+1)\tau]$ with $k \in \mathbb{N}^{*}$, according to the following rules: 
\begin{enumerate}[label={
R\arabic*)},leftmargin=*]
\item \label{item:RR1} 
For each $t \in [k\tau, (k+1) \tau]$ if 
 \\
 $$ V(s) \leq V(k\tau) \exp^{-\sigma (s-k\tau)}+\frac{\epsilon}{\sigma} \quad  \forall s \in [k\tau, t],$$ 
then $\dot{\hat{\theta}}(t)=0$; otherwise, 
\begin{align} \label{eqjump}
\hat{\theta}(r)=\hat{\theta}(k\tau)+\Delta \quad \forall r\in [t,(k+1)\tau]. 
\end{align}

\item \label{item:RR4} On the interval $[0,\tau)$, we set $\hat{\theta} = \hat{\theta}(0)\geq 0$. 
\end{enumerate}
\end{adapt}

According to Algorithm \ref{adapt_0}, if $t\mapsto \hat{\theta}(t)$ is bounded, being non-decreasing, we can show that $\hat{\theta}$ becomes constant after some finite time. Hence, according to \ref{item:RR1} in Algorithm \ref{adapt_0}, the function $t\mapsto V(t)$ converges to a ball of radius $\epsilon /\sigma$. Since such a radius can be made arbitrarily small, we conclude that we can achieve the convergence of $V$ to an arbitrarily small neighborhood of the origin.

\begin{theorem}\label{thm_p}
Consider system \eqref{KS} under \eqref{bc} such that Assumption \ref{asssupf} holds. Let $V$ be defined in \eqref{V_new}, $\kappa$ be defined in \eqref{eq:kappa}, and the parameter $\hat{\theta}$ governed by Algorithm \ref{adapt_0}. Then, we conclude that \eqref{KS} and \eqref{bc} in closed-loop with $u := \kappa(V,u_{xxx}(0),\hat{\theta})$ satisfies the following properties: 
\begin{itemize}
\item  $L^2$ globally practically attractivity (GpA) of the origin; namely, for any $\eta>0$, we can find adaptation gains $(\tau, \sigma, \epsilon)$ such that, for any $\hat{\theta}(0) \geq 0$ and for any $V(0) \geq 0$, we have $\limsup_{t\to +\infty} V(t) 
\leq \eta$.
\item For each $\hat{\theta}(0)$, there exists $M>0$ such that $|\hat{\theta}|_{\infty} 
\leq M$ for all $V(0)$.
\item The input $u_{1}$ remains bounded.
\end{itemize}
\end{theorem}

\begin{proof}
We first show that $\hat{\theta}$ admits an upperbound that does not depend on $V(0)$. As a result, since $\hat{\theta}$ is nondecreasing and, when it increases, it does so according to \eqref{eqjump}, we conclude that $\hat{\theta}$ becomes constant after some finite time $T \geq 0$. Next, we analyze the Lyapunov function candidate $V$ in \eqref{V_new} and show that,  after a finite time, $V$ starts decaying exponentially towards a neighborhood of the origin,  whose size is proportional to $\epsilon$. Hence, we conclude $L^2$-global practical attractivity of the origin. Finally, using boundedness of $(V,\hat{\theta})$ and the structure of the feedback law $\kappa$, boundedness of the control input $u_1$ follows. Let us show that
\begin{align} \label{theta_bound}
\hat{\theta}(t) \leq \max\{\theta + C^2/\epsilon + \sigma + \Delta, \hat{\theta}(0) \} \quad \forall t \geq 0. 
\end{align}
To conclude that the inequality \eqref{theta_bound} is verified, we first suppose that 
$\hat{\theta}(0) \leq  \theta+C^2/\epsilon + \sigma$. 
As a result, either 
$$ \hat{\theta}(t) \leq  \theta+C^2/\epsilon + \sigma \quad \forall t \geq 0. $$
Otherwise, in view of \ref{item:RR1} in Algorithm \ref{adapt_0}, there exist $k^* \geq 1$ 
and $t \in [(k^*-1)\tau, k^* \tau)$
such that 
$$  \theta+C^2/\epsilon + \sigma < \hat{\theta}(t) \leq \theta+C^2/\epsilon + \sigma + \Delta. $$ 
Using \eqref{eqYoungafter}, we then conclude that
\begin{align*}
V(s) \leq V(t) \exp^{-\sigma (s-t)}+\frac{\epsilon}{\sigma} \quad  \forall s \in [t,k^* \tau].
\end{align*}
This implies that $\hat{\theta}$ is constant on $[t, k^*\tau]$. Since $\hat{\theta}$ is nondecreasing, it follows that  
\begin{align*}
V(s) \leq V(t) \exp^{-\sigma (s-t)}+\frac{\epsilon}{\sigma} \qquad  \forall s \geq t,
\end{align*}
which in turn
implies that 
$$\hat{\theta}(s) = \hat{\theta}(t) \leq \theta+C^2/\epsilon + \sigma + \Delta \qquad \forall s \geq t. $$
If 
 $\hat{\theta}(0) > \theta+C^2/\epsilon + \sigma, $ 
  we use the fact that $\hat{\theta}$ is nondecreasing to conclude that
\begin{align*}
V(s) \leq V(0) \exp^{-\sigma s}+\frac{\epsilon}{\sigma} \qquad  \forall s \geq 0,
\end{align*}
which in turn
implies that 
$ \hat{\theta}(s) 
= \hat{\theta}(0)$ for all $s \geq 0. $
To analyze the function $V$, we let $k \geq 0$ such that $\hat{\theta}$ is constant on $[k\tau, +\infty)$. As a result, for each $t \in [k\tau, +\infty)$, there exists $n \geq 0$ such that $t\in [(k+n)\tau,(k+1+n)\tau)$. 
Now, according to \ref{item:RR1}, we have
\begin{equation}
\label{final_V_o}
\begin{aligned}
V(t) & \leq V((k+n)\tau)\exp^{-\sigma (t-(k+n)\tau)} + \frac{\epsilon}{\sigma} 
\\ & \qquad \qquad \qquad  \forall t\in [(k+n)\tau, (k+1+n)\tau]. 
\end{aligned}
\end{equation}
Moreover, by continuity of $V$ and according to \ref{item:RR1}, we conclude that, for each $i\in \{1,...,n\}$, we have   
\begin{align*}
V((k+i)\tau)\leq V((k+i-1)\tau)\exp^{-\sigma \tau} + \frac{\epsilon}{\sigma}. 
\end{align*}
We show next that
\begin{align}
V((k+n)\tau) \leq V(k\tau)\exp^{-\sigma n \tau} + \frac{\epsilon}{\sigma(1-\exp^{-\sigma \tau})}. \label{final_V} 
\end{align}
Indeed, the latter inequality combined with \eqref{final_V_o} allows us to conclude $L^2$-global practical attractivity of the origin. To prove \eqref{final_V}, we note that 
\begin{align*}
V&((k+n)\tau)  \leq V((k+n-1)\tau)\exp^{-\sigma \tau} + \frac{\epsilon}{\sigma} \nonumber \\ &
\leq V((k+n-2)\tau)\exp^{-2\sigma \tau}  
+ \frac{\epsilon}{\sigma}\left(1+\exp^{-\sigma \tau}\right) 
\nonumber \\ & 
\leq V(k\tau)\exp^{-\sigma n \tau} + \frac{\epsilon}{\sigma}\left(\sum_{j=0}^{n}\exp^{-j\sigma \tau} \right).
\end{align*}
Finally, \eqref{final_V} follows using the fact that 
$$\sum_{j=0}^{n}\exp^{-j\sigma \tau}\leq \sum_{j=0}^{\infty}\exp^{-j\sigma \tau} 
\leq \frac{1}{1-\exp^{-\sigma \tau}}. $$
\end{proof}

\begin{remark}
    Under full sensing and when $f\equiv 0$, we can guarantee $L^2$ global exponential stability of the origin by following an approach analogous to the one  in this section. More specifically, since $f \equiv 0$, then $\dot{V}\leq (\theta-\hat{\theta})V$. The parameter $\hat{\theta}$ is then updated according to Algorithm \ref{adapt_0}, while setting $\epsilon := 0$ therein. The rest follows using the arguments in the proof of Theorem \ref{thm_p}.
\end{remark}

\section{Simulation Example} \label{numerical}

The system \eqref{KS} under \eqref{eqBKS} is simulated with the controller introduced in Section \ref{GUUB}. 
We set $\lambda(x) := 4\pi^{2}/0.25+50+2\sin(4x)^2$ for all $x \in (0,Y)$ and $\lambda(x) := \lambda (x-Y)$ for all $x\in (Y,1)$ and  $f(x,t) := 12\times 10^3\sin (2\times 10^4t)+\xi(x,t)$, where $\xi$ is a white noise of power $80$ dBW generated using the Matlab function 'wgn'. 
We set $I_{1} := [0,1)\cup [2,2.8)\cup [3.9,5)\cup [5.5,6.5)\cup [7,7.6) \times 10^{-3}$ and $I_{2} := [1,2)\cup [2.8,3.9)\cup [5,5.5)\cup [6.5,7)\cup [7.6,8) \times 10^{-3}$.  Finally, $(u_1,u_2)$ are given by \eqref{eqInputs}, with $\kappa$  defined in \eqref{eq:kappa} for $(\varepsilon, \delta) := (1,2)$, and $\hat{\theta}_1$ and $\hat{\theta}_2$ designed according to Algorithm \ref{adapt_algo_2} with $\hat{\theta}_1(0)=\hat{\theta}_2(0)=0$, $\Delta_1=\Delta_2 =0.01$, and $\sigma = 100$.

The system is discretized with the mesh-free collocation method in \cite{collocation}, based on radial basis functions (RBFs). We estimate the boundary values $w_{xxx}(0)$ and $w_{x}(0)$ using the Euler forward scheme and $v_{xxx}(1)$ and $v_{x}(1)$ using the Euler backward scheme. The Lyapunov function candidates $(V_{1},V_{2})$  are estimated using Riemannian sums.

We use multiquadric RBFs, which depend on a shape parameter $c :=0.4$.
The initial and final simulation times are selected as $t_{1} := 0$ and $t_{f} := 8 \times 10^{-3}$, respectively.  We select $10$ uniformly separated collocation points on the interval $[0,Y]$ (with $Y:=0.5$), ranging from $x_{o} := 0$ to $x_{9} := Y$, and select the same number of points on the interval $[Y,1]$. Hence, our spatial step is $\Delta x \approx 0.05$, which makes us choose the time step of $\Delta t := 10^{-7}$ to keep the ratio $\Delta t/\Delta x^4$  small. For comparison, the time step $\Delta t := 10^{-9}$ and the spatial step $\Delta x := 5\times 10^{-3}$ are used in \cite{kdv}, to simulate the third-order Korteweg-de Vries-Burgers' equation defined on $[0,1]$. Finally, the initial condition $u_o(x) := -A(\cos (4\pi x)-1)$ for all $x \in (0,1)$ is tested for different values of $A>0$. 

Figure \ref{fig2} shows the closed-loop solutions corresponding to the initial condition $u_o$ under $A = 3$. The corresponding control inputs are shown in Figure \ref{control_fig}. 
As expected, the inputs are bounded, at the price of some chattering 
due to the discontinuous nature of our controller. 
Furthermore, the corresponding plots of $t \mapsto (V_1(t),V_2(t))$ and $t \mapsto V_1(t)+V_2(t)$ are depicted in Figure \ref{fig3}. On every interval in $I_2$, where $u_1 = 0$, the final value of $V_1$ is greater than its initial value at the beginning of that time interval. This increase is compensated over the next intervals in $I_1$, where $u_1 = \kappa (V_1,w_{xxx}(0),\hat{\theta}_1)$.  
The same behavior is observed for $V_2$, which increases over intervals in $I_1$ and decreases over intervals in $I_2$. 
In Figure \ref{fig_adapt1}, we plot the evolution of the adaptation parameters $t \mapsto (\hat{\theta}_{1}(t),\hat{\theta}_{2}(t))$, which shows the existence of an adaptation phase, during which $\hat{\theta}_1$ and $\hat{\theta}_2$ increase before they become, as expected, constant. Finally, Figure \ref{last_fig} illustrate  the evolution of $t \mapsto V_{1}(t) + V_{2}(t)$ for the initial condition $u_o$ with $A \in 
\{2,3,4,5,6,7\}$. The conclusions of Theorem \ref{theorem2} are in agreement with Figure \ref{last_fig}, which shows finite-time convergence of $V_{1}+V_{2}$ to a bound that is independent of the various choices of $A$.

\begin{figure}
\centering
\includegraphics[scale=0.19]{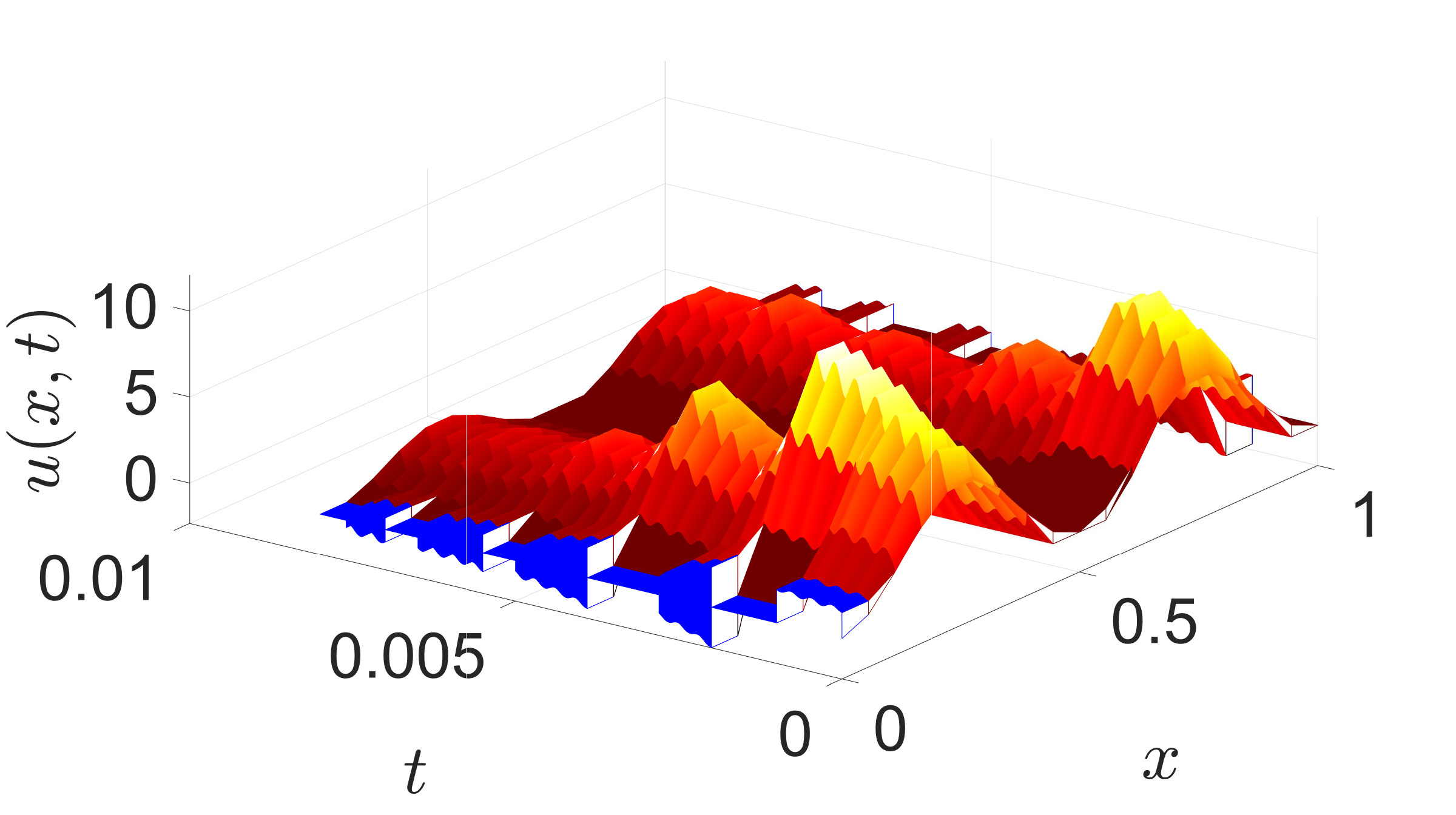}
\caption{The KS response to \eqref{eqBKS}, \eqref{eqInputs}, and Algorithm \ref{adapt_algo_2}.}
\label{fig2}
\end{figure}

\begin{figure}
    \centering
    \includegraphics[scale=0.19]{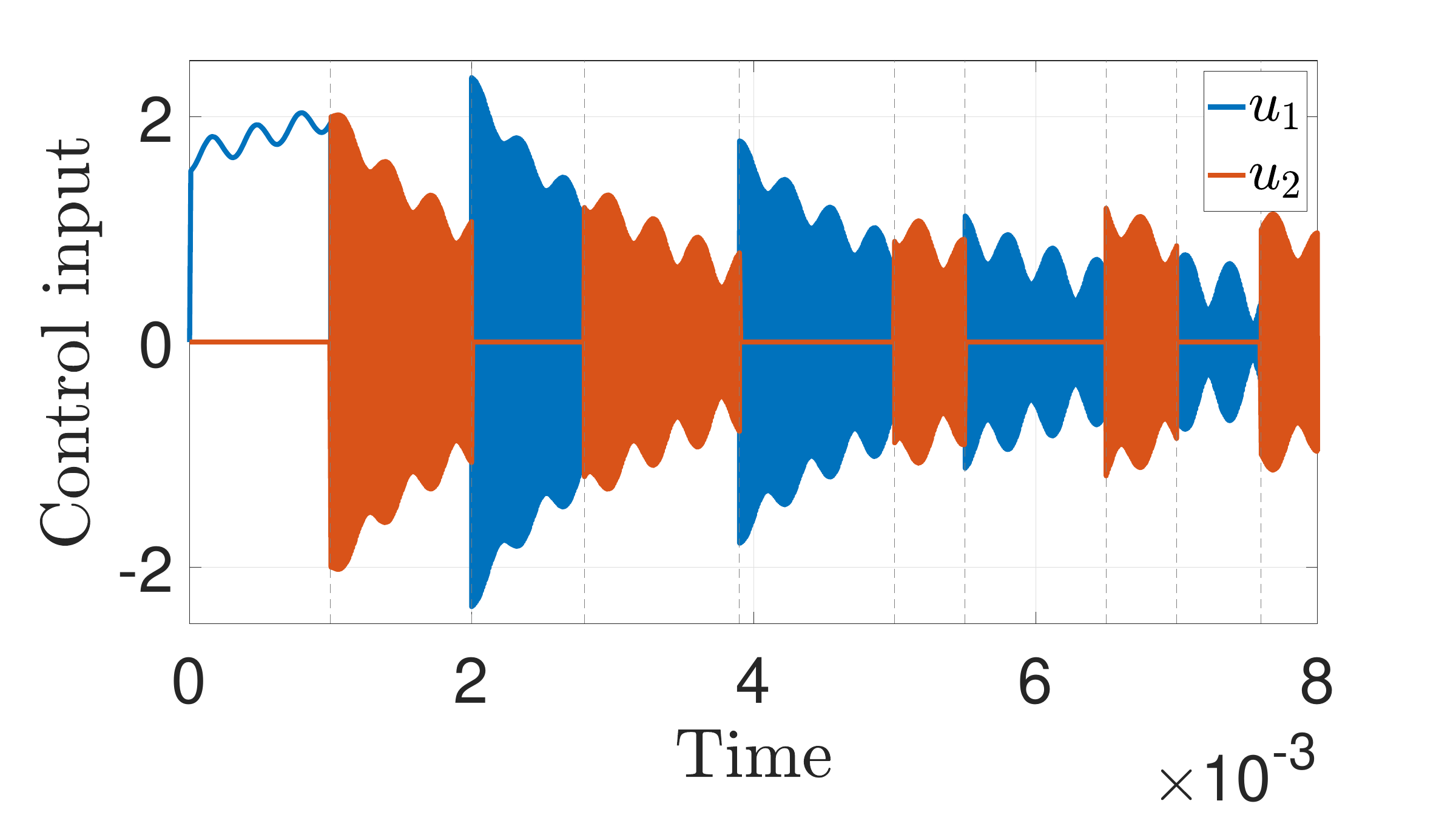}
    \caption{The  inputs $(u_1,u_2)$ in  \eqref{eqInputs} and Algorithm \ref{adapt_algo_2}.}
    \label{control_fig}
\end{figure}
\begin{figure}
    \centering
    \includegraphics[scale=0.19]{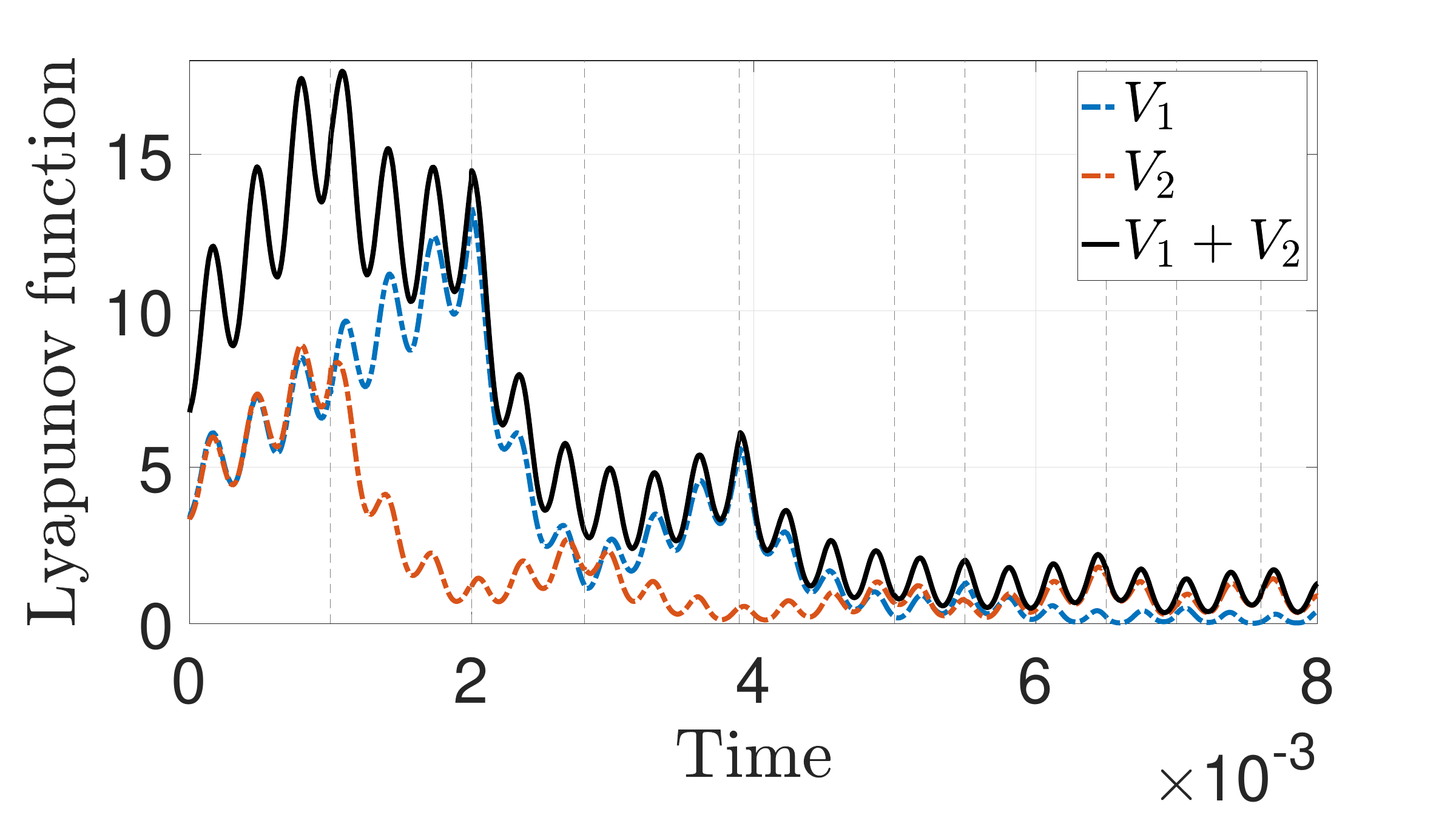}
    \caption{Lyapunov functions $(V_1,V_2)$ and $V_1+V_2$ along the KS response under \eqref{eqBKS}, \eqref{eqInputs}, and Algorithm \ref{adapt_algo_2}.}
    \label{fig3}
\end{figure}
 \begin{figure}
     \centering
     \includegraphics[scale=0.19]{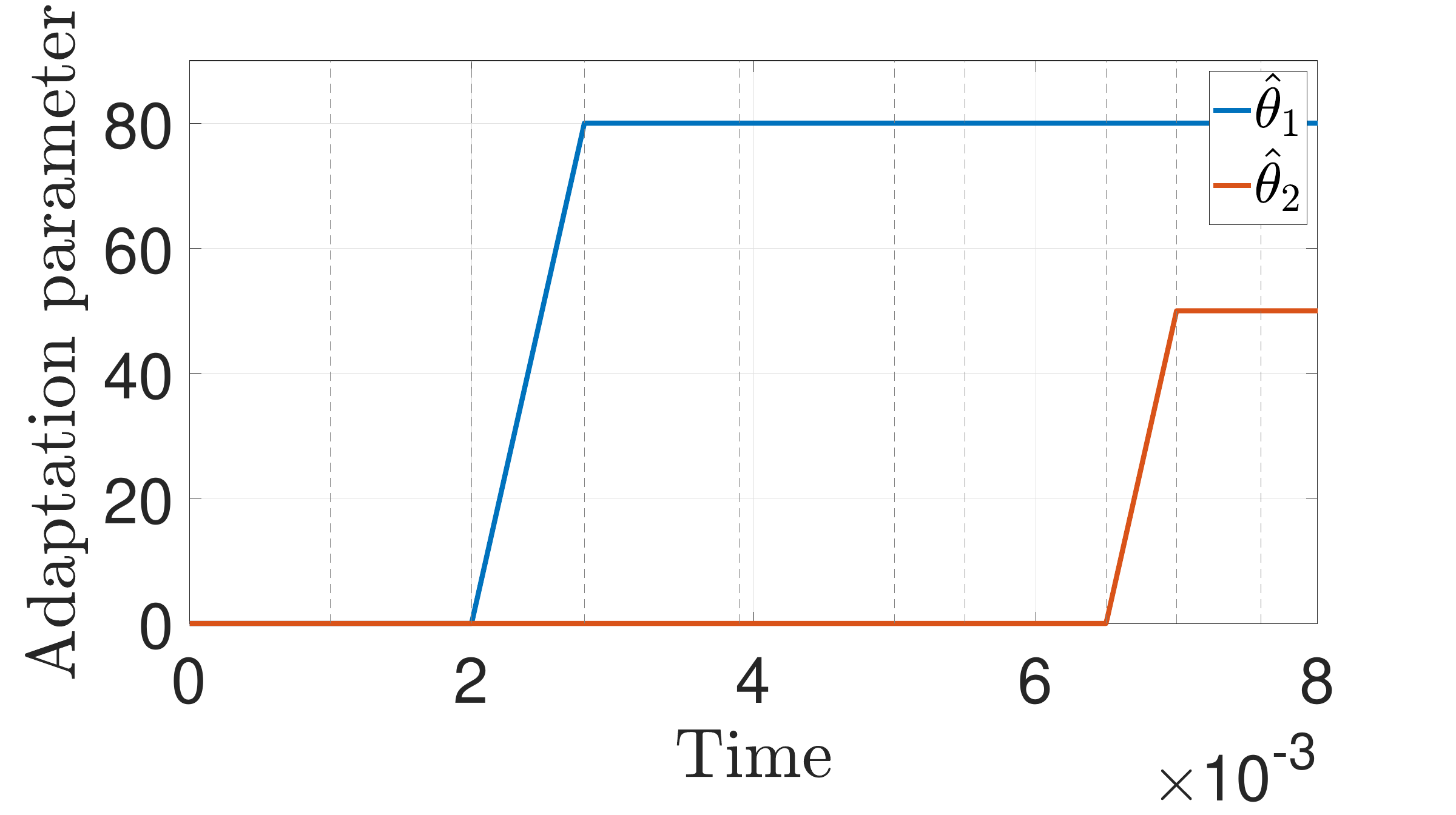}
     \caption{The adaptation parameters $(\hat{\theta}_{1},\hat{\theta}_{2})$ constructed according to Algorithm \ref{adapt_algo_2}.}
     \label{fig_adapt1}
 \end{figure}

\begin{figure}
    \centering
    \includegraphics[scale=0.19]{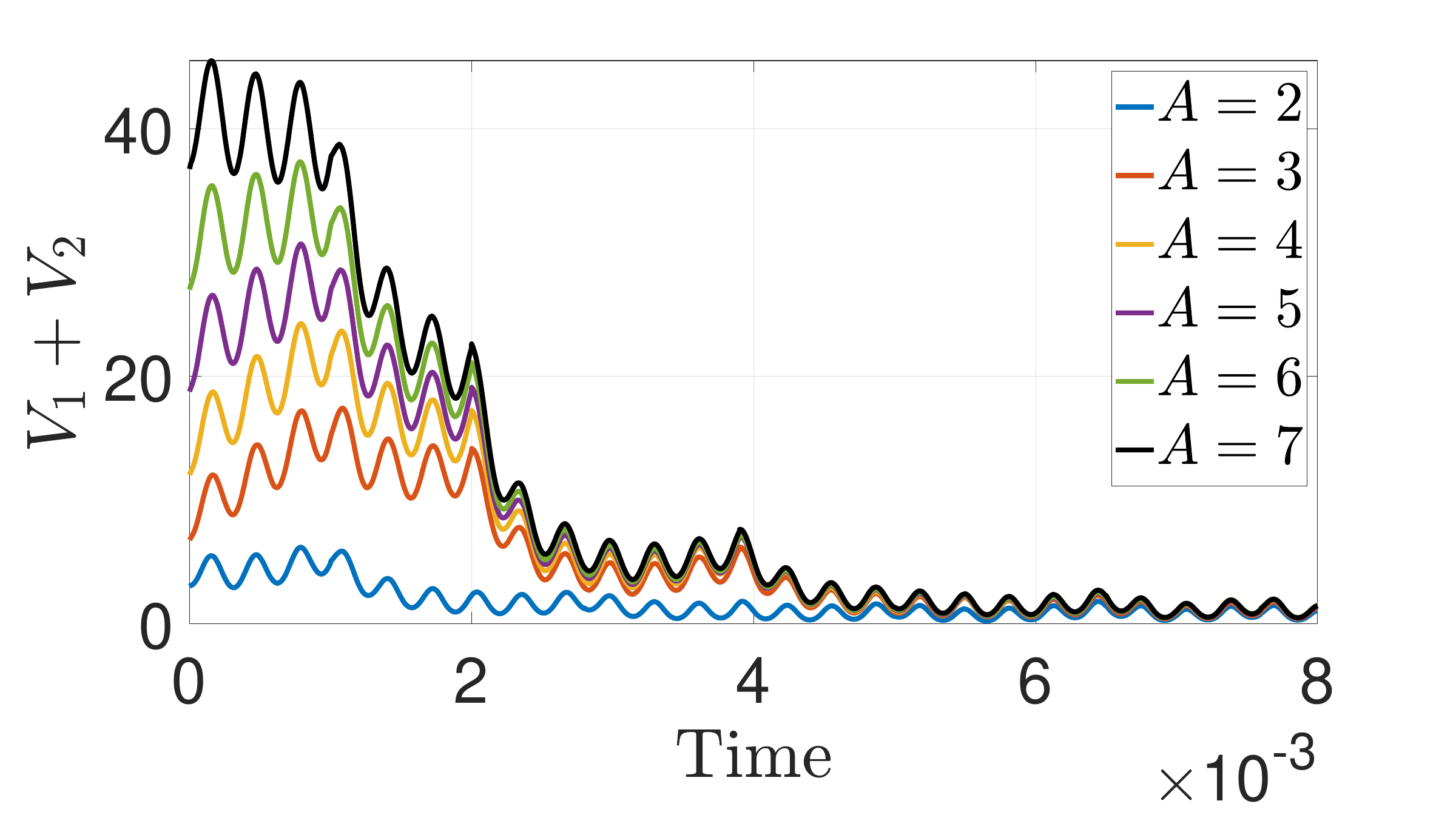}
    \caption{$V_1 + V_2$ along the KS response to \eqref{eqBKS}, \eqref{eqInputs}, and Algorithm \ref{adapt_algo_2}, for different sizes of initial condition.}
    \label{last_fig}
\end{figure}

\section{Conclusion}
In this paper, we studied the boundary stabilization problem for the perturbed KS equation under an intermittent sensing scenario, and compared the obtained results with those we obtained with full sensing. The intermittent sensing scenario forces us to consider in-domain conditions, that we do not necessarily require under full sensing. Our design is Lyapunov-based, tailored to some adaptive design to handle unknown parameters and perturbations. The obtained results are summarized in the following table.  
\begin{table}[h!]
    \centering
    \begin{tabular}{@{}llll@{}}
        \toprule
        \textbf{Sensing Scenario}  & \textbf{$f=0$} & \textbf{$\bar{f}$ known} & \textbf{$\bar{f}$ unknown} \\ 
        \midrule
        Full Sensing               & $L^2$-GES  & $L^2$-GpA & $L^2$-GpA         \\
        Intermittent Sensing       & $L^2$-GES  & $L^2$-ISS  & $L^2$-GUUB        \\ 
        \bottomrule
    \end{tabular}
\end{table}

Several challenging research directions emerge from our work. A primary extension would involve considering measurements taken intermittently over an arbitrary number of subregions, rather than just two. Another compelling direction concerns uncertainty in the location $Y$, which becomes particularly relevant when the Lyapunov functions $(V_1,V_2)$ must be approximated numerically using sensor measurements. A related scenario arises when enforcing the in-domain condition at $x=Y$ while gathering measurements over intervals $[0,Y_o]$ and $[Y_o,1]$ with $Y_o<Y$. The consideration of packet losses under an average dwell-time condition is also an interesting research perspective. Finally, extending our approach to accommodate measurements from scanning and pointwise sensors \cite{scanning,intermittent_app7} represents a natural progression of our work.

\par\noindent 
\parbox[t]{\linewidth}{
\noindent\parpic{\includegraphics[height=1.5in,width=1in,clip,keepaspectratio]{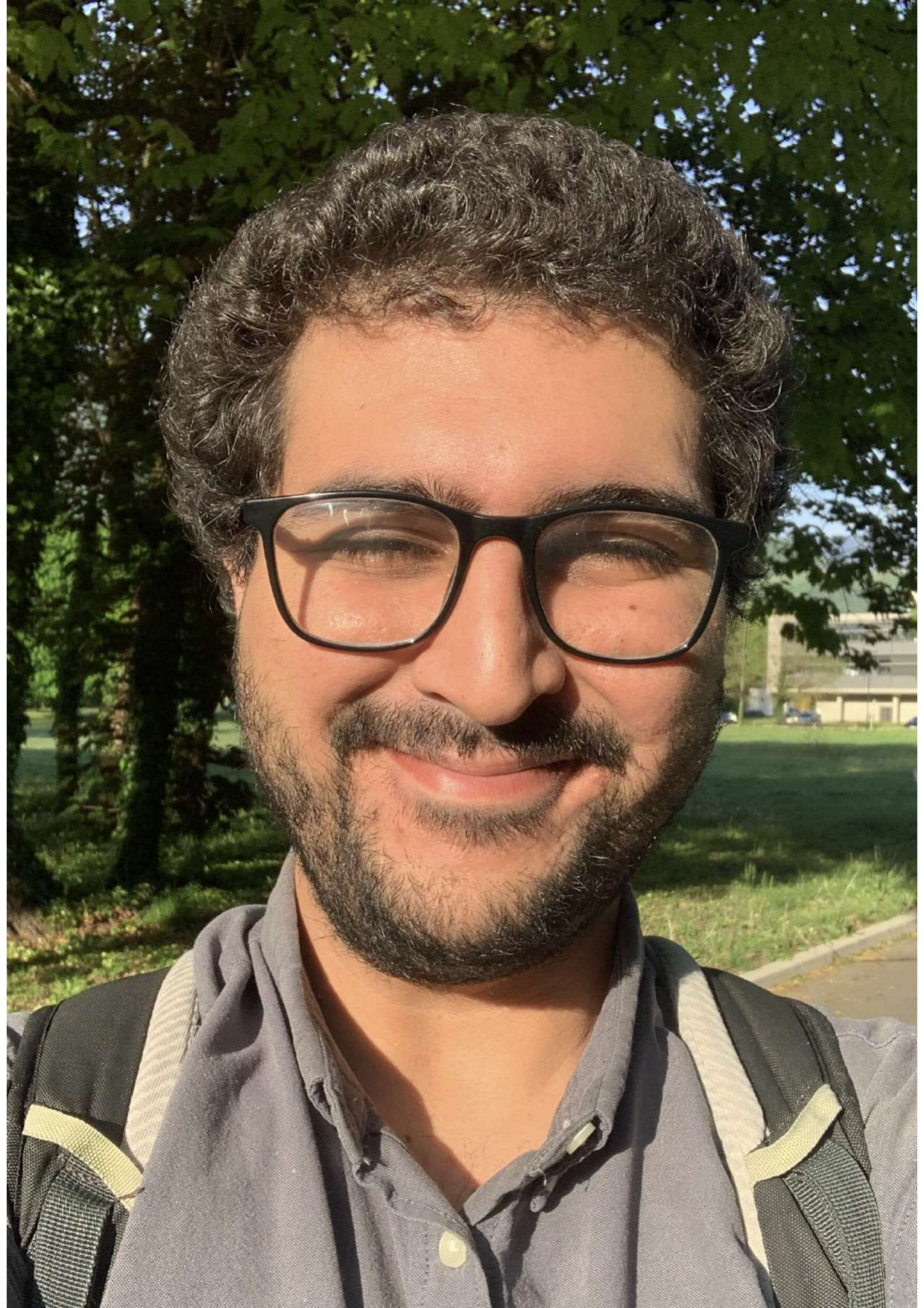}}
\noindent {\bf Mohamed Camil Belhadjoudja}\
received a Control-Engineer and MS degrees from the National Polytechnic School, Algiers, Algeria, in 2021, and a MS degree in Control Theory from the Université Grenoble Alpes, Grenoble, France, in 2022. He is currently a Ph.D. student at the Grenoble Images Parole Signal Automatique Laboratory (GIPSA-lab), Grenoble, France. His research is focused on control and state estimation for nonlinear partial differential equations.}


\par\noindent 
\parbox[t]{\linewidth}{
\noindent\parpic{\includegraphics[height=1.5in,width=1in,clip,keepaspectratio]{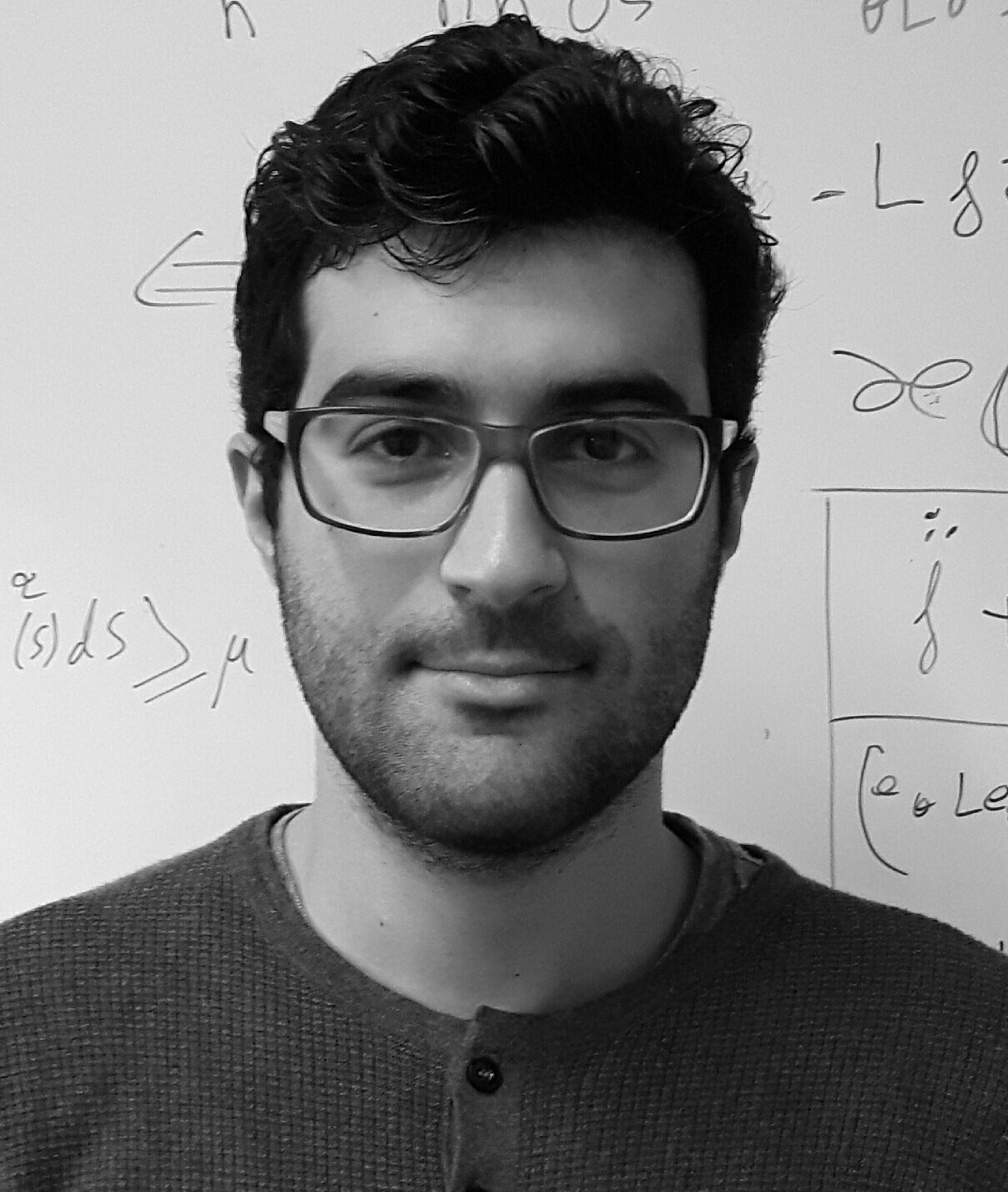}}
\noindent {\bf Mohamed Maghenem}\
received his Control-Engineer degree from the Polytechnical School of Algiers, Algeria, in 2013, his  PhD degree on Automatic Control from the University of Paris-Saclay, France, in 2017.  He was a Postdoctoral Fellow at the Electrical and Computer Engineering Department at the University of California at Santa Cruz from 2018 through 2021. M. Maghenem holds a research position at the French National Centre of Scientific Research (CNRS) since January 2021.  His research interests include control systems theory (linear,  non-linear,  and hybrid) to ensure (stability, safety, reachability, synchronisation, and robustness); with applications to mechanical systems,  power systems,  cyber-physical systems,  and some partial differential equations.}
 

\par\noindent 
\parbox[t]{\linewidth}{
\noindent\parpic{\includegraphics[height=1.5in,width=1in,clip,keepaspectratio]{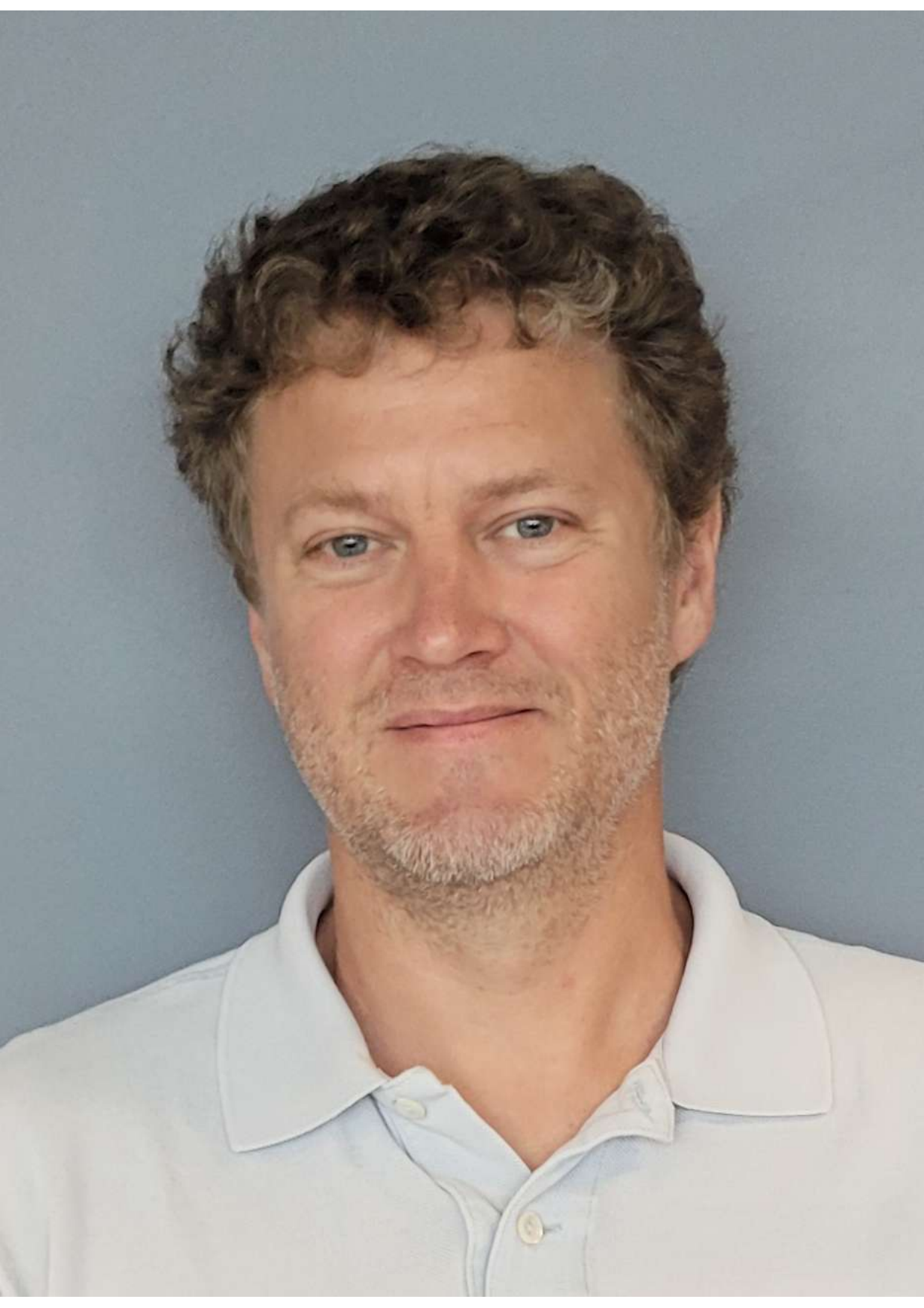}}
\noindent {\bf Emmanuel Witrant}\
obtained a B.Sc. in Aerospace Eng. from Georgia Tech in 2001 and a Ph.D. in Automatic Control from Grenoble University in 2005. He joined the Physics department of Univ. Grenoble Alpes and GIPSA-lab as an Associate Professor in 2007 and became Professor in 2020. His research interest is focused on finding new methods for modeling and control of inhomogeneous transport phenomena (of information, energy, gases...), with real-time and/or optimization constraints. Such methods provide new results for automatic control (time-delay and distributed systems), controlled thermonuclear fusion (temperature/magnetic flux profiles in tokamak plasma), environmental sciences (atmospheric history of trace gas from ice cores measurements) and Poiseuille’s flows (ventilation control in mines, car engines and intelligent buildings).}

\par\noindent 
\parbox[t]{\linewidth}{
\noindent\parpic{\includegraphics[height=1.5in,width=1in,clip,keepaspectratio]{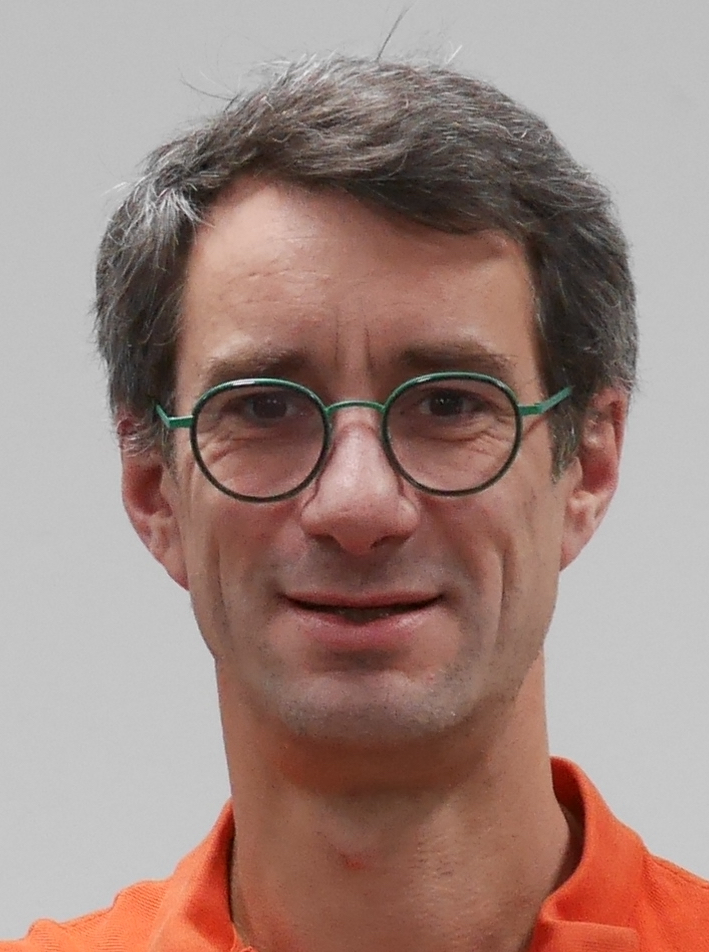}}
\noindent {\bf Christophe Prieur}\
graduated in Mathematics from the Ecole Normale Supérieure de Cachan, France in 2000. He received the Ph.D. degree in 2001 in Applied Mathematics from Université Paris-Sud, France. From 2002 he was a researcher at CNRS at the laboratory SATIE, Cachan, France, and at the LAAS, Toulouse, France (2004–2010). In 2010 he joined the Gipsa-lab, Grenoble, France where he is currently a senior researcher at CNRS (since 2011). 
His current research interests include nonlinear control theory, hybrid systems, and control of partial differential equations, with applications including navigation and object tracking, fluid dynamics, and fusion control. He is an IMA Fellow, and an IEEE Fellow.}
\vspace{4\baselineskip}

\newpage 

\ifitsdraft 
\section*{Appendix}

\begin{lemma} [\cite{PITT}, page 84] \label{halperin-pitt}
Given $u:(a,b)\rightarrow \mathbb{R}$ be twice continuously differentiable and let $\epsilon > 0$. Then,  
\begin{equation} \label{halperin}
\begin{aligned}
\int_{a}^{b}u_x^2dx \leq&~ \left[ \frac{1}{\epsilon}  +\frac{12}{(b-a)^2} \right] \int_{a}^{b} u^2dx + \epsilon \int_{a}^{b} u_{xx}^2dx.
\end{aligned}
\end{equation}
\end{lemma}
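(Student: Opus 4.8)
The plan is to first remove the dependence on the interval length by an affine change of variables, and then prove the resulting inequality on the reference interval $(0,1)$. Setting $x = a + (b-a)t$ and $U(t) := u(a+(b-a)t)$, each integral rescales by an explicit power of $h := b-a$: one has $\int_a^b (u')^2\,dx = h^{-1}\int_0^1 (U')^2\,dt$, $\int_a^b u^2\,dx = h\int_0^1 U^2\,dt$, and $\int_a^b (u'')^2\,dx = h^{-3}\int_0^1 (U'')^2\,dt$. Substituting these and writing $\epsilon_0 := \epsilon/h^2$, the claimed inequality is \emph{equivalent} to the single interval-free statement
\begin{equation*}
\int_0^1 (U')^2\,dt \le \Big(\tfrac{1}{\epsilon_0}+12\Big)\int_0^1 U^2\,dt + \epsilon_0\int_0^1 (U'')^2\,dt,
\end{equation*}
required for every $\epsilon_0>0$. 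This reduction is routine (the map $\epsilon \mapsto \epsilon_0$ is a bijection of $(0,\infty)$) and accounts transparently for the factor $(b-a)^{-2}$ in the statement.

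On $(0,1)$ I would decompose $U = L + v$, where $L(t) := (1-t)U(0) + tU(1)$ is the affine interpolant of the boundary data and $v := U-L \in H^1_0(0,1)$ satisfies $v(0)=v(1)=0$ and $v'' = U''$. Since $\int_0^1 v'\,dt = 0$ and $L' \equiv m := U(1)-U(0)$ is constant, the cross term drops and $\int_0^1 (U')^2 = m^2 + \int_0^1 (v')^2$. The constant $12$ then enters through the elementary variance identity $\int_0^1 L^2\,dt = \big(\int_0^1 L\,dt\big)^2 + m^2/12$, which gives $m^2 \le 12\int_0^1 L^2\,dt$; this is exactly the equality case realized by an affine (zero-$u''$) profile. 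For the fluctuation, integrating by parts and using $v(0)=v(1)=0$ yields $\int_0^1 (v')^2 = -\int_0^1 v\,U''\,dt$, and Young's inequality bounds this by $\tfrac{1}{4\epsilon_0}\int_0^1 v^2 + \epsilon_0\int_0^1 (U'')^2$, which already isolates the desired $\epsilon_0\int (U'')^2$ term.

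The \textbf{main obstacle} is the assembly step: recombining $m^2 \le 12\int_0^1 L^2$ and the fluctuation bound into the clean coefficient $\tfrac{1}{\epsilon_0}+12$ multiplying $\int_0^1 U^2$, rather than $\int_0^1 L^2$ and $\int_0^1 v^2$ separately. Because $\int_0^1 U^2 = \int_0^1 L^2 + 2\int_0^1 Lv + \int_0^1 v^2$, the indefinite cross term $\int_0^1 Lv\,dt$ couples the affine and fluctuating parts, and a naive sign estimate of it is too lossy to close the inequality with these precise constants (the endpoint interpolant is \emph{not} $L^2$-orthogonal to $H^1_0$, and its $L^2$ norm is not controlled by that of $U$ alone — e.g. a sharp boundary layer makes $\int L^2$ order one while $\int U^2$ is small, so the $\int (U'')^2$ term must genuinely intervene). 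Resolving this requires distributing the $\epsilon_0\int (U'')^2$ budget between the interior Young step and a trace-type control of the boundary/coupling contribution — precisely the careful bookkeeping carried out in \cite[p.~84]{PITT}. I would therefore either follow that accounting or, since the inequality is classical and invoked only as an auxiliary tool here, simply cite \cite{PITT} for the sharp constants after having exhibited the scaling reduction and the origin of the factor $12$.
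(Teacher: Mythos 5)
The first thing to say is that the paper contains no proof of this lemma: it is imported verbatim from \cite{PITT} (page 84) and used as a black box in the proof of Lemma \ref{v_space_vary}. So there is no in-paper argument to match, and your closing move---exhibit the scaling structure, then cite \cite{PITT} for the sharp constants---lands exactly where the paper does. The parts you do carry out are correct and are a genuine supplement to the paper's bare citation: the identities $\int_a^b (u')^2\,dx = h^{-1}\int_0^1 (U')^2\,dt$, $\int_a^b u^2\,dx = h\int_0^1 U^2\,dt$, $\int_a^b (u'')^2\,dx = h^{-3}\int_0^1 (U'')^2\,dt$ with $h=b-a$ do reduce the claim, via the bijection $\epsilon\mapsto\epsilon_0=\epsilon/h^2$, to the unit-interval inequality with coefficient $1/\epsilon_0+12$; the orthogonality $\int_0^1 (U')^2 = m^2+\int_0^1 (v')^2$, the variance identity $\int_0^1 L^2 = \left(\int_0^1 L\right)^2+m^2/12$, and the Young step $\int_0^1 (v')^2 \le \frac{1}{4\epsilon_0}\int_0^1 v^2+\epsilon_0\int_0^1(U'')^2$ are all checked. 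Your identification of where $12$ comes from is also the right one: mean-zero affine profiles $U=m(t-1/2)$ (with $U''\equiv 0$) saturate the coefficient $12$, so it cannot be lowered; and $U=\sin(\omega t)$ with $\epsilon_0=\omega^{-2}$ shows the $1/\epsilon_0$ part is sharp as well, so there is no slack in either regime.

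That said, as a standalone proof the proposal is incomplete at precisely the point you flag, and the obstacle is real, not an artifact of your bookkeeping. Your two bounds control the left side by $12\int_0^1 L^2+\frac{1}{4\epsilon_0}\int_0^1 v^2+\epsilon_0\int_0^1(U'')^2$, and neither $\int_0^1 L^2$ nor $\int_0^1 v^2$ is dominated by $\int_0^1 U^2$: your boundary-layer example is decisive ($U(0)=U(1)=1$, $U\approx 0$ outside layers of width $\delta$ gives $\int L^2=1$ and $\int v^2=O(1)$ while $\int U^2=O(\delta)$, with only $\int (U'')^2\sim\delta^{-3}$ available to compensate). Since both constants are sharp, no lossy estimate of the cross term $\int_0^1 Lv$ can close the argument; one must spend part of the $\epsilon_0\int(U'')^2$ budget on the boundary data, e.g.\ via the hinged-beam Poincar\'e inequality $\int_0^1 v^2\le \pi^{-4}\int_0^1 (v'')^2$ (valid for $v\in H^2\cap H^1_0(0,1)$, and here $v''=U''$), and verify the exact coefficient $1/\epsilon_0+12$ uniformly in $\epsilon_0$---including the intermediate regime where neither term is generous. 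That is the accounting done in \cite{PITT}, and given that the paper itself treats the lemma as a cited classical fact, deferring to that reference after your reduction is an acceptable resolution; just be explicit that what you have proved is the scaling equivalence and the sharpness of the constants, not the inequality itself.
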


\begin{lemma}\label{to_use_proofs1}
Let $\{t_{i}\}_{i=1}^{\infty} \subset \mathbb{R}$, with $t_1=0$ and $t_{i+1}>t_i$, $I_1 := \bigcup_{k=1}^{\infty}[t_{2k-1},t_{2k})$, and $I_2:= \bigcup_{k=1}^{\infty}[t_{2k},t_{2k+1})$. Moreover, let $\underline{T}_1\leq t_{2k}-t_{2k-1}\leq \bar{T}_1$ and $\underline{T}_2\leq t_{2k+1}-t_{2k}\leq \bar{T}_2$ for all $k\geq 1$, and for some constants $\underline{T}_1, \underline{T}_2,\bar{T}_1,\bar{T}_2>0$. Furthermore, let $\theta, C\in \mathbb{R}_{\geq 0}$, and $t\mapsto \hat{\theta}(t)\geq 0$ be a non-decreasing function such that there exists $T\geq 0$, such that, for all $t\geq T$, we have 
\begin{align}
\hat{\theta}(t) \geq \theta + C + \frac{(\theta+1)\bar{T}_2+\sigma (\bar{T}_1+\bar{T}_2)}{\underline{T}_1}+1, \label{15bis}
\end{align}
where $\sigma >0$. We let $V:\mathbb{R}_{\geq 0}\to \mathbb{R}_{\geq 0}$ be a locally absolutely continuous solution to the switched system of differential inequalities 
\begin{equation} \label{lem4_switch}
\begin{aligned}
\left\{
\begin{matrix}
\begin{matrix}
\dot{V} \leq & (\theta-\hat{\theta})V + C\sqrt{V}
\end{matrix}
\qquad \text{a.e. on}~ I_{1},
\\ 
\begin{matrix}
\dot{V} \leq & \theta V+C\sqrt{V}
\end{matrix}
\qquad \qquad \ \ \text{a.e. on}~ I_{2}.
\end{matrix}
\right.
\end{aligned}
\end{equation}
Then, for all $k>1$ such that $t_{2k-3}\geq T$, we have 
\begin{align}
&V(t_{2k-2})\leq V(t_{2k-3})\exp^{-\sigma (t_{2k-2}-t_{2k-3})} + C, \label{pre_17} \\
&V(t_{2k-2})\leq V(t_{2k-3})\exp^{-\sigma (t_{2k-2}-t_{2k-3})}+\hat{\theta}(t_{2k-3}).\label{17}
\end{align}
\end{lemma}
\begin{proof}
To prove \eqref{17}, first note that, a.e. on $[t_{2k-3},t_{2k-2})\subset I_1$, we have, according to \eqref{lem4_switch},
\begin{align}
\dot{V} &\leq (\theta - \hat{\theta})V + C \sqrt{V} \nonumber \\
& \leq (\theta+C-\hat{\theta})V + C, \label{18}
\end{align}
where we have used the fact that $\sqrt{V}\leq V+1$. Integrating \eqref{18}, we obtain 
\begin{align*}
V(t_{2k-2}) & \leq V(t_{2k-3})\exp^{\int_{t_{2k-3}}^{t_{2k-2}}(\theta+C-\hat{\theta}(t))dt} \\
&+ C\int_{t_{2k-3}}^{t_{2k-2}}\exp^{\int_{\tau}^{t_{2k-2}}(\theta+C-\hat{\theta}(s))ds} d\tau. 
\end{align*}
Furthermore, since $t \mapsto \hat{\theta}(t)$ is non-decreasing, we conclude that $t \mapsto \theta + C -\hat{\theta}(t)$ is non-increasing. Therefore, 
$$ \theta + C -\hat{\theta}(t) \leq  \theta + C -\hat{\theta}(t_{2k-3}),\qquad \forall t \geq t_{2k-3}.  $$
Using the latter inequality, we can write  
\begin{align*}
V(t_{2k-2}) & \leq V(t_{2k-3})\exp^{(\theta+C-\hat{\theta}(t_{2k-3}))(t_{2k-2}-t_{2k-3})} \\
&+ C\int_{t_{2k-3}}^{t_{2k-2}}\exp^{(\theta+C-\hat{\theta}(t_{2k-3}))(t_{2k-2}-\tau)}d\tau \\ &
\leq V(t_{2k-3})\exp^{(\theta+C-\hat{\theta}(t_{2k-3}))(t_{2k-2}-t_{2k-3})} \\
&+ C\bigg[\frac{\exp^{(\theta+C-\hat{\theta}(t_{2k-3}))(t_{2k-2}-\tau)}}{-(\theta+C-\hat{\theta}(t_{2k-3}))}\bigg]_{\tau = t_{2k-3}}^{\tau = t_{2k-2}}\\ &
\leq V(t_{2k-3})\exp^{(\theta+C-\hat{\theta}(t_{2k-3}))(t_{2k-2}-t_{2k-3})} 
\\ &
- \bigg[\frac{C}{(\theta+C-\hat{\theta}(t_{2k-3}))}\bigg]
 \\
 &+
C\bigg[\frac{\exp^{(\theta+C-\hat{\theta}(t_{2k-3}))(t_{2k-2}-t_{2k-3})}}{(\theta+C-\hat{\theta}(t_{2k-3}))}\bigg].
\end{align*}
Next, in view of inequality \eqref{15bis}, we conclude that 
\begin{align}
\left[\theta + C -\hat{\theta}(t_{2k-3}) \right] \underline{T}_1 & \leq - \underline{T}_1 - (\theta+1)\bar{T}_2-\sigma (\bar{T}_1+\bar{T}_2)\nonumber \\
&< -\sigma (\bar{T}_1+\bar{T}_2).\label{27}
\end{align}
and 
$$\theta+C-\hat{\theta}(t_{2k-3}) \leq -1.$$
Now, using the fact that 
$$ \underline{T}_1\leq t_{2k-2}-t_{2k-3} \quad  \text{and} \quad \theta+C-\hat{\theta}(t_{2k-3})  < 0,  $$ 
we obtain
\begin{align*}
&\exp^{(\theta+C-\hat{\theta}(t_{2k-3}))(t_{2k-2}-t_{2k-3})}  \leq \exp^{(\theta+C-\hat{\theta}(t_{2k-3}))\underline{T}_1},
\\
&C\bigg[\frac{\exp^{(\theta+C-\hat{\theta}(t_{2k-3}))(t_{2k-2}-t_{2k-3})}}{(\theta+C-\hat{\theta}(t_{2k-3}))}\bigg] < 0,
\end{align*}
and thus 
\begin{align*}
V(t_{2k-2}) & 
\leq V(t_{2k-3})\exp^{(\theta+C-\hat{\theta}(t_{2k-3}))\underline{T}_1} 
\\ &
- \bigg[\frac{C}{(\theta+C-\hat{\theta}(t_{2k-3}))}\bigg]
 \\
 &+
C\bigg[\frac{\exp^{(\theta+C-\hat{\theta}(t_{2k-3}))(t_{2k-2}-t_{2k-3})}}{(\theta+C-\hat{\theta}(t_{2k-3}))}\bigg]
\\ & 
\leq V(t_{2k-3})\exp^{(\theta+C-\hat{\theta}(t_{2k-3}))\underline{T}_1} \\
&- \bigg[\frac{C}{(\theta+C-\hat{\theta}(t_{2k-3}))}\bigg].
\end{align*}
Furthermore, in view of \eqref{27} and, since $t_{2k-2}-t_{2k-3} \leq \bar{T}_1+\bar{T}_2$, we conclude that 
\begin{align*}
\exp^{(\theta+C-\hat{\theta}(t_{2k-3}))\underline{T}_1} &\leq \exp^{-\sigma (\bar{T}_1+\bar{T}_2)} \\&\leq \exp^{-\sigma (t_{2k-2}-t_{2k-3})}.
\end{align*}
Combining the aforementioned inequalities,  and the fact that $-\frac{C}{\theta+C-\hat{\theta}(t_{2k-3})} \leq C$, we obtain
\begin{align*}
V(t_{2k-2}) & \leq V(t_{2k-3})\exp^{(\theta+C-\hat{\theta}(t_{2k-3}))\underline{T}_1} + C \nonumber\\
& \leq V(t_{2k-3})\exp^{-\sigma (t_{2k-2}-t_{2k-3})} + C. 
\end{align*}
Finally, \eqref{17} is obtained by combining \eqref{pre_17} with the fact that, from \eqref{15bis}, we have $\hat{\theta}(t_{2k-3})\geq C$.
\end{proof}
\begin{lemma}\label{ineqsquare}
Let $V : \mathbb{R}_{\geq 0} \rightarrow \mathbb{R}_{\geq 0}$ be a locally absolutely continuous solution to the differential inequality 
    \begin{equation}
        \dot{V}\leq \theta V+C\sqrt{V} \quad \text{a.e. on} \ [0,T] \subset \mathbb{R}_{\geq 0}, \label{lem31}  
    \end{equation}
    where $\theta , C\geq 0$ are constants. Then, for any constant $\delta >0$ we have, for all $t\in [0,T]$,
    \begin{equation}
    \hspace{-0.2cm} V(t)\leq V(0)\exp^{(\theta+\delta)t} + \frac{C^{2}/4\delta}{ (\theta + \delta)}\left(\exp^{(\theta + \delta)t}-1\right).\label{lem32}
    \end{equation}
\end{lemma}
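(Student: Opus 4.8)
The plan is to absorb the sublinear term $C\sqrt{V}$ into a linear term, so that \eqref{lem31} reduces to a standard linear differential inequality, to which Gr\"onwall--Bellman (equivalently, an integrating-factor argument) applies. The only genuine idea is how to dispose of $C\sqrt{V}$; once it is linearized, the explicit bound \eqref{lem32} falls out of a routine integration.

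First I would apply Young's inequality to the cross term. For every $t$ and the fixed $\delta>0$, completing the square yields the pointwise algebraic inequality
$$ C\sqrt{V(t)} \leq \delta V(t) + \frac{C^{2}}{4\delta}, $$
since $\delta V + C^{2}/(4\delta) - C\sqrt{V} = \bigl(\sqrt{\delta}\sqrt{V} - C/(2\sqrt{\delta})\bigr)^{2}\geq 0$. This is merely an inequality between nonnegative reals, so I never need to differentiate $\sqrt{V}$ (which would be delicate at points where $V$ vanishes). Substituting into \eqref{lem31} gives
$$ \dot{V} \leq (\theta+\delta)\,V + \frac{C^{2}}{4\delta} \qquad \text{a.e. on } [0,T]. $$

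Next I would integrate this linear inequality by the integrating-factor method. Since $V$ is locally absolutely continuous and $\theta,\delta\geq 0$ with $\theta+\delta>0$, the map $t\mapsto V(t)\exp^{-(\theta+\delta)t}$ is locally absolutely continuous and satisfies, for almost every $t$,
$$ \frac{d}{dt}\Bigl(V(t)\exp^{-(\theta+\delta)t}\Bigr) \leq \frac{C^{2}}{4\delta}\,\exp^{-(\theta+\delta)t}. $$
Integrating from $0$ to $t$, using $\int_{0}^{t}\exp^{-(\theta+\delta)s}\,ds = (1-\exp^{-(\theta+\delta)t})/(\theta+\delta)$, and then multiplying through by $\exp^{(\theta+\delta)t}$, produces exactly \eqref{lem32}.

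The argument is almost entirely routine; the one place deserving care is the linearization step, where the choice $C\sqrt{V}\leq \delta V + C^{2}/(4\delta)$ is precisely what introduces the free parameter $\delta$ and the constant $C^{2}/(4\delta)$ appearing in \eqref{lem32}. A tempting alternative --- setting $W:=\sqrt{V}$ to obtain $\dot{W}\leq (\theta/2)W + C/2$ --- would give a $\delta$-free bound of a different form and breaks down wherever $V=0$; this is the obstacle that the Young-inequality route sidesteps. The remaining step is the standard comparison principle for absolutely continuous solutions of scalar linear differential inequalities, which requires no more than the fundamental theorem of calculus.
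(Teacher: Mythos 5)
Your proposal is correct and follows essentially the same route as the paper: both establish the pointwise bound $C\sqrt{V}\leq \delta V + C^{2}/(4\delta)$ (the paper via a monotonicity argument on $f(V):=\sqrt{V}-\delta V/C-C/(4\delta)$, you by completing the square, which is the same inequality) and then integrate the resulting linear differential inequality to obtain \eqref{lem32}.
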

\begin{proof}
Let $\delta >0$ and consider the function $f(V) := \sqrt{V}-\frac{\delta V}{C}-\frac{C}{4\delta }$. By differentiating $f$, we find for all $V>0$, $ f'(V) = \frac{1}{2\sqrt{V}}-\frac{\delta }{C}$. As a result, the function $f$ is strictly increasing on $[0,C^{2}/(4\delta^{2})]$ and strictly decreasing on $[C^{2}/(4\delta^{2}),\infty)$. Moreover, $f(0)=-C/4\delta$, and $f(C^{2}/(4\delta^{2}))=0$. Therefore, for all $V\geq 0$, we have $f(V)\leq 0$. As a consequence, we can rewrite \eqref{lem31} as $\dot{V} \leq \theta V+C\sqrt{V}
\leq (\theta+\delta)V+\frac{C^{2}}{4\delta}$. By integrating this inequality from $0$ to $t$, \eqref{lem32} follows. 
\end{proof}

\begin{lemma}\label{last_lem}
Let the function $V:\mathbb{R}_{\geq 0} \to \mathbb{R}_{\geq 0}$ be locally absolutely continuous, 
let a sequence 
$\{T_i\}_{i=0}^{\infty}$ and $\underline{T}, \overline{T} > 0$ 
such that 
$ T_{0}=0 ~~ \text{and} ~~ \overline{T} \geq  T_{i+1} - T_i \geq \underline{T}  \quad \forall i \in \mathbb{N}$. Let $\{i_1, i_2, ..., i_{N^*}\} \subset \mathbb{N}$, with $N^* \in \mathbb{N}$, and let $(M,\psi,\sigma)$ be nonnegative constants. Assume that
\begin{itemize}
    \item For each $i \in \{i_1, i_2, ..., i_{N^*}\}$,
    \begin{align} \label{eq1lemma17}
\hspace{-1cm} V(T_{i+1}) \leq \left(V(T_{i})+\frac{M^2}{4}\right)\exp^{\psi(T_{i+1} - T_{i})}.
    \end{align}
    \item For each $i \in \mathbb{N}/
    \{i_1, i_2, ..., i_{N^*}\}$,
    \begin{equation}
\label{eq2lemma17}
    \begin{aligned}
   \hspace{-0.4cm} V(T_{i+1}) & \leq V(T_{i})\exp^{-\sigma (T_{i+1}-T_{i})}  
    \\ \hspace{-0.4cm} &
    +\left(M+\frac{M^2}{4}\right) \exp^{\psi(T_{i+1}-T_{i})}. 
    \end{aligned}
    \end{equation}
\end{itemize}
 Then, we have
\begin{align*}
&V(T_{i}) \leq \exp^{(\sigma+\psi) N^*\overline{T}}V(0)\exp^{-\sigma T_i} \\
&~+ \left(\frac{\exp^{(\sigma+\psi) N^*\overline{T}}}{1-\exp^{-\sigma \underline{T}}} \right)\frac{4M+M^2}{4} \exp^{\psi \overline{T}}, \quad \forall i \in \mathbb{N}. 
\end{align*}
\end{lemma}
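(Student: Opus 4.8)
The plan is to reduce each one-step estimate to an affine recursion of the form $V(T_{i+1}) \le \alpha_i V(T_i) + \beta_i$ and then unroll it. Writing $\tau_i := T_{i+1}-T_i \in [\underline{T},\overline{T}]$, the ``good'' step \eqref{eq2lemma17} has this form with $\alpha_i = e^{-\sigma\tau_i}$ and $\beta_i = (M + M^2/4)e^{\psi\tau_i}$, while each of the finitely many ``bad'' steps \eqref{eq1lemma17} has $\alpha_i = e^{\psi\tau_i}$ and $\beta_i = (M^2/4)e^{\psi\tau_i}$. In either case $\beta_i \le b := (M + M^2/4)e^{\psi\overline{T}}$. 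Iterating the recursion from $T_0 = 0$ gives, for every $n$,
$$V(T_n) \le \Big(\prod_{i=0}^{n-1}\alpha_i\Big)V(0) + \sum_{k=0}^{n-1}\Big(\prod_{i=k+1}^{n-1}\alpha_i\Big)\beta_k,$$
so it remains to bound the homogeneous product and the forced sum.

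For the homogeneous factor I would split the product over good and bad indices. Since at most $N^*$ of the indices in $\{0,\dots,n-1\}$ are bad and each $\tau_i \le \overline{T}$, the aggregate length of bad intervals is at most $N^*\overline{T}$, so the sum of $\tau_i$ over good indices is at least $T_n - N^*\overline{T}$. Collecting exponents,
$$\prod_{i=0}^{n-1}\alpha_i \le \exp\!\big(-\sigma(T_n - N^*\overline{T}) + \psi N^*\overline{T}\big) = \gamma\, e^{-\sigma T_n}, \qquad \gamma := e^{(\sigma+\psi)N^*\overline{T}} \ge 1.$$
The same estimate applied to any tail product $\prod_{i=k+1}^{n-1}\alpha_i$, which again meets at most $N^*$ bad indices, yields $\prod_{i=k+1}^{n-1}\alpha_i \le \gamma\, e^{-\sigma(T_n - T_{k+1})}$ with the \emph{same} constant $\gamma$.

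For the forced sum I would then use $\beta_k \le b$ together with $T_n - T_{k+1} \ge (n-1-k)\underline{T}$ and sum the resulting geometric series:
$$\sum_{k=0}^{n-1}\Big(\prod_{i=k+1}^{n-1}\alpha_i\Big)\beta_k \le \gamma b \sum_{m=0}^{\infty} e^{-\sigma\underline{T}m} = \frac{\gamma b}{1 - e^{-\sigma\underline{T}}} =: \Phi(M).$$
Since $\Phi(M) = \frac{\gamma e^{\psi\overline{T}}}{1 - e^{-\sigma\underline{T}}}\,(M + M^2/4)$ is continuous, strictly increasing, and vanishes at $M = 0$, it is of class $\mathcal{K}$, and combining the two bounds yields $V(T_n) \le \gamma V(0)e^{-\sigma T_n} + \Phi(M)$, as claimed. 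The main obstacle is essentially bookkeeping: the bad intervals may be interspersed arbitrarily among the good ones, so the argument must bound their aggregate multiplicative effect \emph{uniformly}, using only that there are at most $N^*$ of them and that each has length $\le \overline{T}$; this is precisely what lets a single constant $\gamma$ serve for every tail product at once. I also note that convergence of the geometric series, and hence finiteness of $\Phi$, relies on $\sigma > 0$ together with the uniform lower bound $\underline{T} > 0$ on the interval lengths.
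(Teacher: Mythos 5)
Your proof is correct and follows essentially the same route as the paper: you obtain the same constant $\gamma = e^{(\sigma+\psi)N^*\overline{T}}$ by bounding the aggregate effect of the at most $N^*$ bad intervals, and the same geometric-series bound (using $\sigma>0$ and $\underline{T}>0$) for the forcing term, merely organizing the argument as a direct unrolling of the affine recursion rather than the paper's induction on $i$ with the partial sums $\eta(i)$ and the counter $N(i)$. Your closing remark that finiteness of $\Phi$ requires $\sigma>0$ (the lemma only states $\sigma\geq 0$) applies equally to the paper's own proof and is consistent with how the lemma is invoked.
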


\begin{proof}
To prove the Lemma, it is enough to show that
\begin{equation}
\label{to_prove_final}
\begin{aligned}
V(T_i) & \leq \exp^{(\sigma+\psi) N^*\overline{T}}
\left(
V(0) \exp^{-\sigma T_i}  
+ \eta(i) \right) ~~ \forall i \in \mathbb{N}, 
\end{aligned}
\end{equation}
where, for each $i \in \mathbb{N}$,  
\begin{align*}
\eta(i) & := 
\left(\sum_{k=0}^{i}\exp^{-k\sigma \underline{T}} \right)\frac{4 M + M^2}{4} \exp^{\psi \overline{T}} \nonumber \\ &
\leq \left(\frac{1}{1-\exp^{-\sigma \underline{T}}} \right)\frac{4M+M^2}{4} \exp^{\psi \overline{T}}.
\end{align*}
To prove \eqref{to_prove_final}, it is sufficient to show that
\begin{equation}
\label{new_ni}
\begin{aligned}
 V(T_i) & \leq \exp^{(\sigma+\psi) N(i) \overline{T}}
\left( V(0)\exp^{-\sigma T_i} + \eta(i) \right) ~~~ \forall i \in \mathbb{N},
\end{aligned}
\end{equation}
where $ N(i) := \text{card} \{ [T_j,T_{j+1}] : j+1 \leq i,~j \in \{i_1,i_2,...,i_{N^*}\} \}$ 
is the number of time intervals $[T_j,T_{j+1}]$, $j \in \{i_1,i_2,...,i_{N^*}\}$, prior to $T_i$, which satisfies $N(i) \leq N^*$  for all $i \in \mathbb{N}$. To show \eqref{new_ni}, we proceed by recurrence. Indeed, for $i=0$, the inequality in \eqref{new_ni} is trivially satisfied. Suppose now that the inequality in \eqref{new_ni} is verified for $i \in \mathbb{N}$ 
and let us show that it is also verified for $i+1$.

Note that either $N(i+1) = N(i)$ or $N(i+1) = N(i)+1$. If $N(i+1)=N(i)$ then, using \eqref{eq2lemma17}, we obtain 
\begin{align*}
 V(T_{i+1}) & \leq V(T_{i}) \exp^{-\sigma (T_{i+1}-T_{i})} 
 \\ & + \left(M+\frac{M^2}{4}\right)\exp^{\psi(T_{i+1}-T_i)} \nonumber \\
 &\leq \bigg(\exp^{(\sigma + \psi)N(i)\bar{T}}(V(0)\exp^{-\sigma T_i} \\
 &+\eta (i))\bigg)\exp^{-\sigma (T_{i+1}-T_i)}
 \nonumber \\
 &~ ~ + \bigg(M+\frac{M^2}{4}\bigg)\exp^{\psi (T_{i+1}-T_i)} \nonumber \\
 &\leq \exp^{(\sigma + \psi)N(i)\bar{T}} V(0) \exp^{-\sigma (T_{i+1}-T_i+T_i)} \nonumber \\
 &~+ \eta (i) \exp^{(\sigma + \psi)N(i)\bar{T}}\exp^{-\sigma (T_{i+1}-T_i)} \nonumber \\
&~+ \bigg(M+\frac{M^2}{4}\bigg)\exp^{\psi (T_{i+1}-T_i)}\nonumber \\
& \leq \exp^{(\sigma +\psi) N(i) \overline{T}} V(0) \exp^{-\sigma T_{i+1}} \nonumber \\
& + \left( \eta(i) \exp^{-\sigma (T_{i+1}-T_i)} \right) \exp^{(\sigma+\psi) N(i) \overline{T}}  
\\
&
+\left(M+\frac{M^2}{4}\right)  \exp^{\psi (T_{i+1}-T_i)}.
\end{align*}
Using the fact that 
\begin{align*} 
 \eta(i) & \exp^{-\sigma (T_{i+1}-T_i)} \leq \bigg[\sum_{k=1}^{i+1}\exp^{-k\sigma \underline{T}}\bigg] \left(M+\frac{M^2}{4}\right) \exp^{\psi \overline{T}}, 
\end{align*}
we obtain
\begin{align*}
V & (T_{i+1}) \leq \exp^{(\sigma + \psi ) N(i) \overline{T}} V(0) \exp^{-\sigma T_{i+1}} \nonumber \\ & + \bigg[\sum_{k=1}^{i+1}\exp^{-k\sigma \underline{T}}\bigg]
\left(M+\frac{M^2}{4}\right) \exp^{\psi  \overline{T}} 
\exp^{(\sigma+\psi ) N(i) \overline{T}} 
\nonumber \\
&  + \left(M + \frac{M^2}{4} \right) \exp^{\psi (T_{i+1}-T_i)}.
\end{align*}
It implies that
\begin{align*}
V(T_{i+1}) & \leq 
\exp^{(\sigma + \psi ) N(i) \overline{T}} V(0) \exp^{-\sigma T_{i+1}} 
\nonumber \\ & 
+ \bigg[\sum_{k=1}^{i+1}\exp^{-k\sigma \underline{T}}\bigg]
\left(M+\frac{M^2}{4}\right) \exp^{\psi  \overline{T}} 
\exp^{(\sigma+\psi ) N(i) \overline{T}} 
\nonumber \\ &
+ \left(M + \frac{M^2}{4} \right) \exp^{\psi \bar{T}} \exp^{(\sigma+\psi) N(i) \overline{T}}. 
\end{align*}
Combining the latter two terms, we obtain
\begin{align*}
& V(T_{i+1}) \leq 
\exp^{(\sigma + \psi ) N(i) \overline{T}} V(0) \exp^{-\sigma T_{i+1}} \nonumber \\ & + \bigg[\sum_{k=0}^{i+1}\exp^{-k\sigma \underline{T}}\bigg]
\left(M+\frac{M^2}{4}\right) \exp^{\psi  \overline{T}} 
\exp^{(\sigma+\psi ) N(i) \overline{T}}. 
\end{align*}

Finally, since $N(i) = N(i+1) \leq N^*$
and  
\begin{align*}
& \eta(i+1) = \bigg[\sum_{k=0}^{i+1}\exp^{-k\sigma \underline{T}}\bigg]
\left(M+\frac{M^2}{4}\right) \exp^{\psi  \overline{T}},
\end{align*}
we obtain 
\begin{align*}
V (T_{i+1}) &  \leq\exp^{(\sigma + \psi )N(i+1) \overline{T}}V(0)\exp^{-\sigma T_{i+1}} \nonumber \\
& + \eta(i+1) \exp^{(\sigma + \psi ) N(i) \overline{T}}.
\end{align*}

If $N(i+1) = N(i)+1$, we use \eqref{eq1lemma17} to conclude that 
\begin{align*}
& V(T_{i+1})  \leq \left(V(T_i)+\frac{M^2}{4}\right)\exp^{\psi \overline{T}}  
\\ &
\leq \exp^{(\sigma + \psi )N(i) \overline{T}} \exp^{\psi \overline{T}} 
\\ &
\times \left( V(0) \exp^{-\sigma T_i} + \eta(i) + \exp^{-(\sigma + \psi )N(i) \overline{T}} \frac{M^2}{4} \right).  
\end{align*}
Now, using the fact that $N(i+1) = N(i) +1$, we obtain
\begin{align*}
& V(T_{i+1}) 
\leq \exp^{(\sigma + \psi )N(i+1) \overline{T}} \exp^{\psi \overline{T}} \exp^{-(\sigma + \psi ) \overline{T}} 
\\ &
\times \left( V(0) \exp^{-\sigma T_i} + \eta(i) + \exp^{-(\sigma + \psi )N(i) \overline{T}} \frac{M^2}{4} \right)
\\ & \leq 
\exp^{(\sigma + \psi )N(i+1) \overline{T}} \exp^{\psi  \overline{T}}  
\\ &
\times \left( V(0) \exp^{-\sigma T_i} + \eta(i) \exp^{-(\sigma + \psi ) \overline{T}} + \frac{M^2}{4} \right). 
\end{align*}
The proof is completed by showing that 
$$ \eta(i) \exp^{-(\sigma + \psi ) \overline{T}} + M^2/4 \leq \eta(i+1).  $$
Indeed, we note that
\begin{align*}
& \eta(i) \exp^{-(\sigma + \psi ) \overline{T}} + \frac{M^2}{4}  \\ & \leq 
\eta(i) \exp^{-\sigma  \overline{T}} \exp^{-\psi  \overline{T}} + \left(M+ \frac{M^2}{4}\right) \exp^{\psi \overline{T}} 
\\ & \leq 
\left(\sum_{k=1}^{i+1}
\exp^{-k\sigma \underline{T}} \right) \left(M + \frac{ M^2}{4} \right) \exp^{\psi 
\overline{T}}  
\\ & + \left(M+ \frac{M^2}{4}\right) \exp^{\psi \overline{T}} = \eta(i+1).   
\end{align*}
\end{proof}

\subsection*{Proof of Lemma \ref{v_space_vary}}
By differentiating $V_{1}$ along \eqref{twopdesB}, we obtain 
    \begin{equation}
    \label{proofGA11}
    \begin{aligned}
    \dot{V}_{1} & = \int_{0}^{Y}w(x)w_{t}(x)dx  \\
    & = \int_{0}^{Y} w(x) \left[-w(x)w_{x}(x)-\lambda(x) w_{xx}(x) \right.  \\ & 
    \left. -w_{xxxx}(x)+f(x)\right] dx. \nonumber
    \end{aligned}
    \end{equation}
Note that  
 $-3 \int_{0}^{Y} w(x)^2 w_{x}(x) dx = - w(Y)^3 +  w(0)^3$.    
 Using integration by part, we obtain
 \begin{align*}
 & - \int^Y_{0} w(x) w_{xxxx}(x) dx
 \\ & = - \left[ w(x) w_{xxx}(x) \right]^Y_{0} + \int^Y_0 w_x(x) w_{xxx}(x) dx 
 \\ &
 = - \left[ w(x) w_{xxx}(x) \right]^Y_{0} + \left[  w_x(x) w_{xx}(x) \right]^Y_{0} - \int^Y_0 w_{xx}(x)^2 dx. 
 \end{align*}
Using the boundary conditions $w_x(0)=w_x(Y)=0$, we obtain 
 \begin{align*}
  \int^Y_{0} w(x) w_{xxxx}(x) dx & =  {\left[ w(x) w_{xxx}(x) \right]}^Y_{0}  - \int^Y_0 w_{xx}(x)^2 dx
 \\ & 
 = -  u_1 w_{xxx}(0)  + \int^Y_0 w_{xx}(x)^2 dx.
 \end{align*}

Similarly, note that
\begin{align*}
-\int_{0}^{Y}\lambda(x) w(x)w_{xx}(x)dx  =&~ \int_{0}^{Y}\lambda(x) w_{x}(x)^2dx \\
&~+\int_{0}^{Y}\lambda'(x)w(x)w_x(x)dx.
\end{align*}
Using Young inequality, we obtain 
$$|\lambda'(x)w(x)w_x(x)|\leq \frac{1}{2}(\bar{\lambda}_{l}^{'2}w(x)^2+w_x(x)^2).$$
This allows us to conclude that 
\begin{align*}
-\int_{0}^{Y}\lambda(x) w(x)w_{xx}(x)dx &\leq \left(\bar{\lambda}_l+\frac{1}{2}\right)\int_{0}^{Y}w_x(x)^2dx \\
&+\bar{\lambda}_{l}^{'2}V_1.
\end{align*}

Finally, using Cauchy-Schwarz inequality, we obtain $\int_{0}^{Y} \hspace{-0.1cm} w(x)f(x)dx \leq \bar{f}\int_{0}^{Y} \hspace{-0.2cm} |w(x)|dx \leq C_{1}\sqrt{V_{1}}$. As a consequence, we have
\begin{align*}
\dot{V}_1 & \leq \bar{\lambda}_{l}^{'2}V_1+\left(\bar{\lambda}_{l}+\frac{1}{2}\right)\int_{0}^{Y}w_x(x)^2dx \nonumber \\
& - \int_{0}^{Y}w_{xx}(x)^2dx + C_1\sqrt{V_1} +\frac{u_1^3}{3}+u_1w_{xxx}(0). 
\end{align*}
Invoking Lemma \ref{halperin-pitt} with $\epsilon := 1/(\bar{\lambda}_{l}+\frac{1}{2})$, we find 
\begin{align*}
\left(\bar{\lambda}_{l}+\frac{1}{2}\right)\int_{0}^{Y} w_x(x)^2dx&- \int_{0}^{Y}w_{xx}(x)^2dx \\
&\leq (\theta_1-\bar{\lambda}_{l}^{'2})V_1,
\end{align*}
which proves inequality \eqref{v1_new}. We show inequality \eqref{v2_new} in a similar way. 
\hfill $\blacksquare$

\subsection*{Proof of Lemma \ref{lem_control}}
Consider first the case when $|\omega |\geq l(V,\hat{\theta})$. It implies that $\kappa^3+3\kappa \omega \leq V-3\sqrt[3]{V}l(V,\hat{\theta})\leq -3\hat{\theta}V$. On the other hand, $|\omega|<l(V,\hat{\theta})$ implies that
\begin{align*}
\kappa^3+ & 3\kappa \omega +3\hat{\theta}V\leq \kappa^3+3|\kappa |l(V,\hat{\theta})+3\hat{\theta}V \\
\leq&~ \big[ -\varepsilon^3 (\hat{\theta}+\delta)^3 +\varepsilon (\hat{\theta}+\delta)(1+3\hat{\theta})+3\hat{\theta}\big]V \\
\leq&~ \big[-\varepsilon^3 \hat{\theta}^3 - 3 \varepsilon (\varepsilon (\varepsilon \delta)-1)\hat{\theta}^2 \\
&~-3(\varepsilon^3\delta^2-(\varepsilon /3)-\varepsilon \delta - 1)\hat{\theta} 
- (\varepsilon \delta)^3 + \varepsilon \delta\big]V.
\end{align*}
Since $\hat{\theta}\geq 0$, we conclude that
$\kappa^3+3\kappa \omega +3\hat{\theta}V \leq 0.$
\hfill $\blacksquare$

\else 

\fi

\end{document}